\begin{document}

\def\bn{{\bf n}}
\def\A{{\bf A}}
\def\B{{\bf B}}
\def\C{{\bf C}}
\def\D{{\bf D}}
\def\E{{\bf E}}
\def\F{{\bf F}}
\def\G{{\bf G}}
\def\H{{\bf H}}
\def\I{{\bf I}}
\def\J{{\bf J}}
\def\K{{\bf K}}
\def\L{{\bf L}}
\def\M{{\bf M}}
\def\N{{\bf N}}
\def\O{{\bf O}}
\def\P{{\bf P}}
\def\Q{{\bf Q}}
\def\R{{\bf R}}
\def\S{{\bf S}}
\def\T{{\bf T}}
\def\U{{\bf U}}
\def\V{{\bf V}}
\def\W{{\bf W}}
\def\X{{\bf X}}
\def\Y{{\bf Y}}
\def\Z{{\bf Z}}
\def\cala{{\cal A}}
\def\calb{{\cal B}}
\def\calc{{\cal C}}
\def\cald{{\cal D}}
\def\cale{{\cal E}}
\def\calf{{\cal F}}
\def\calg{{\cal G}}
\def\calh{{\cal H}}
\def\cali{{\cal I}}
\def\calj{{\cal J}}
\def\calk{{\cal K}}
\def\call{{\cal L}}
\def\calm{{\cal M}}
\def\caln{{\cal N}}
\def\calo{{\cal O}}
\def\calp{{\cal P}}
\def\calq{{\cal Q}}
\def\calr{{\cal R}}
\def\cals{{\cal S}}
\def\calt{{\cal T}}
\def\calu{{\cal U}}
\def\calv{{\cal V}}
\def\calw{{\cal W}}
\def\calx{{\cal X}}
\def\caly{{\cal Y}}
\def\calz{{\cal Z}}
%
\def\sskip{\hspace{0.5cm}}
\def\simleq{ \raisebox{-.7ex}{\em $\stackrel{{\textstyle <}}{\sim}$} }
\def\leqsim{ \raisebox{-.7ex}{\em $\stackrel{{\textstyle <}}{\sim}$} }
\def\ep{\epsilon}
\def\half{\frac{1}{2}}
\def\iku{\rightarrow}
\def\Iku{\Rightarrow}
\def\ikup{\rightarrow^{p}}
\def\inclusion{\hookrightarrow}
\def\cadlag{c\`adl\`ag\ }
\def\up{\uparrow}
\def\down{\downarrow}
\def\doti{\Leftrightarrow}
\def\douti{\Leftrightarrow}
\def\dochi{\Leftrightarrow}
\def\douchi{\Leftrightarrow}%
\def\yy{\\ && \nonumber \\}
\def\y{\vspace*{3mm}\\}
\def\nn{\nonumber}
\def\be{\begin{equation}}
\def\ee{\end{equation}}
\def\bea{\begin{eqnarray}}
\def\eea{\end{eqnarray}}
\def\beas{\begin{eqnarray*}}
\def\eeas{\end{eqnarray*}}
%
\def\hd{\hat{D}}
\def\hv{\hat{V}}
\def\hsd{{\hat{d}}}
\def\hx{\hat{X}}
\def\hsx{\hat{x}}
\def\bsx{\bar{x}}
\def\bsd{{\bar{d}}}
\def\bx{\bar{X}}
\def\ba{\bar{A}}
\def\bb{\bar{B}}
\def\bc{\bar{C}}
\def\bv{\bar{V}}
\def\balpha{\bar{\alpha}}
\def\bbalpha{\bar{\bar{\alpha}}}
\def\combi{\l(\begin{array}{c}\alpha\\ \beta \end{array}\r)}
\def\f{^{(1)}}
\def\s{^{(2)}}
\def\ss{^{(2)*}}
\def\l{\left}
\def\r{\right}
\def\a{\alpha}
\def\b{\beta}
\def\L{\Lambda}

\newtheorem{thm}{Theorem}
\newtheorem{lemma}{Lemma}
\newtheorem{prop}{Proposition}
\newtheorem{defn}{Definition}
\newtheorem{rem}{Remark}
\newtheorem{step}{Step}
\newtheorem{cor}{Corollary}

\newcommand{\Cov}{\mathop {\rm Cov}}
\newcommand{\Var}{\mathop {\rm Var}}
\renewcommand{\E}{\mathop {\rm E}}
\newcommand{\const }{\mathop {\rm const }}
\everymath {\displaystyle}

\newcommand{\ruby}[2]{
\leavevmode
\setbox0=\hbox{#1}
\setbox1=\hbox{\tiny #2}
\ifdim\wd0>\wd1 \dimen0=\wd0 \else \dimen0=\wd1 \fi
\hbox{
\kanjiskip=0pt plus 2fil
\xkanjiskip=0pt plus 2fil
\vbox{
\hbox to \dimen0{
\small \hfil#2\hfil}
\nointerlineskip
\hbox to \dimen0{\mathstrut\hfil#1\hfil}}}}

\def\qedsymbol{$\blacksquare$}
\renewcommand{\thefootnote }{\fnsymbol{footnote}}
\renewcommand{\refname }{References}

\everymath {\displaystyle}

\title{
A Semi-group Expansion for Pricing Barrier Options
\footnote{We are very grateful to Professor Seisho Sato in
the University of Tokyo
for his suggestions to our numerical computation.}}
\date{First Version: February, 2012, This Version: May 30, 2013}
\author{Takashi Kato\footnote{
Division of Mathematical Science for Social Systems, Graduate School of Engineering Science,
Osaka University, 1-3, Machikaneyama-cho, Toyonaka, Osaka 560-8531, Japan, 
E-mail: \texttt{kato@sigmath.es.osaka-u.ac.jp}
}
\and Akihiko Takahashi\footnote{
Graduate School of Economics, The University of Tokyo, 
7-3-1 Hongo, Bunkyo, Tokyo, 113-0033 Japan. 
}
\and 
Toshihiro Yamada\footnote{
Mitsubishi UFJ Trust Investment Technology Institute Co.,Ltd. (MTEC), 
2-6, Akasaka 4-Chome, Minato, Tokyo, 107-0052 Japan, 
E-mail: \texttt{yamada@mtec-institute.co.jp}
} \footnote{
Graduate School of Economics, The University of Tokyo, 
7-3-1 Hongo, Bunkyo, Tokyo, 113-0033 Japan. 
} }
\maketitle

\allowdisplaybreaks
\begin{abstract}
This paper presents a new asymptotic expansion method
for pricing continuously monitoring barrier options.
In particular, we develops a semi-group expansion scheme for
the Cauchy-Dirichlet problem in the second-order parabolic partial differential equations (PDEs)
arising in
barrier option pricing.
As an application, we propose a concrete approximation formula 
under a stochastic volatility model
and demonstrate its validity by some numerical experiments.
\footnote[0]{Mathematical Subject Classification (2010) \  35B20, 35C20, 91G20}\\\\
{\bf Keywords} : Barrier options, 
Asymptotic expansion, Stochastic volatility model,
Semi-group representation,
Cauchy-Dirichlet problem. 
\end{abstract}

\section{Introduction}\label{sec_intro}
Since the Merton's seminal work (\cite{Merton})
barrier options have been quite popular and important products
in both academics and financial business for the last four decades.
In particular, fast and accurate computation of their prices and Greeks is highly desirable in the risk management,
which is a tough task under the finance models commonly used in practice.
Thus, it has been one of the central issues in the mathematical finance community.
Among various approaches to attacking the problem, this paper proposes a new semi-group expansion scheme 
under general diffusion setting.

Firstly, let us note that the value of a continuously monitoring down-and-out  barrier option is expressed as the following form:
\begin{eqnarray}\label{def_CB}
C_\mathrm {Barrier}(T,x) = 
\E[f(X_T^x)1_{\{ \tau > T \}}] =\E[f(X_T^x)1_{\{ \min_{t \in [0,T]} X_{t} > L \}}]. 
\end{eqnarray} 
Here, $T > 0$ is a maturity of the option, and
$(X_t^x )_t$ denotes a vector process
starting from $x$ including a price process of the underlying asset
(usually given as the solution of a certain stochastic differential equation (SDE)).
Also, $L$ stands for a constant lower barrier, that is
$L < x$, and
$\tau$ is the hitting time to $L$:
\begin{eqnarray}
\tau = \inf \{ t\in [0, T] : X_t^x \leq  L  \}.
\end{eqnarray} 

It is well-known that a possible approach in computation of 
$C_\mathrm {Barrier}(T,x)$
is the Euler-Maruyama scheme, which 
stores the sample paths of the process $(X^x_t)_t$ through an $n$-time discretization 
with the step size $T / n$. 
When applying this scheme to pricing a continuously monitoring barrier option,
one kills the simulated process, say $(\bar{X}^x_{t_i})_i$
if $\bar{X}^x_{t_i}$ exits from the domain $(L, \infty )$ until the maturity $T$.  
The usual Euler-Maruyama scheme is {\it suboptimal} 
since it does not control the diffusion paths between 
two successive dates $t_i$ and $t_{i+1}$: 
the diffusion paths could have crossed the barriers and come back to the domain without being detected. 
It is also known that the error 
between
$C_\mathrm {Barrier}(T,x)$ and 
$\bar{C}_\mathrm {Barrier}(T,x)$, the barrier option price obtained by the Euler-Maruyama scheme
is of order $\sqrt{T / n}$,
as opposed to 
the 
order $T / n$ for standard plain-vanilla options. (See \cite{Gobet}) 
Thus, 
various Monte-Carlo schemes 
have been proposed for improving the order of the error. 
(See \cite {Pham} for instance.)

One of the other tractable approaches for 
calculating $C_\mathrm {Barrier}(T,x)$
is to derive an analytical approximation.
If we obtain an accurate 
approximation formula, 
it is a powerful tool for pricing continuously monitoring barrier options 
because we need not rely on Monte-Carlo simulations anymore.
However, from a mathematical viewpoint, 
deriving an approximation formula by applying stochastic analysis 
is not an easy task since the Malliavin calculus cannot be directly applied.
It is due to 
the non-existence of the Malliavin derivative $D_t\tau$ (see \cite {FLLL}) 
and to the fact that 
the minimum (maximum) process of the Brownian motion has only the first-order differentiability in the Malliavin sense. 
Thus, neither approach in \cite{Kunitomo-Takahashi} nor in \cite{Takahashi-Yamada} 
can be applied directly to valuation of
continuously monitoring barrier options, while they are applicable to 
pricing discrete barrier options. (See \cite {STY1} for the detail.)

This paper proposes a new general method for 
the approximation of barrier option prices. 
Particularly, our objective is to pricing barrier options when the dynamics of
the underlying asset price is described by the following perturbed SDE:  
\begin{eqnarray}\label{eq_SDE_epsilon0}
\left\{
\begin{array}{l}
 	dX^{\varepsilon , x}_t = b(X^{\varepsilon , x}_t, \varepsilon )dt + \sigma (X^{\varepsilon , x}_t, \varepsilon )dB_t,\\
 	X^{\varepsilon , x}_0 = x, 
\end{array}
\right.
\end{eqnarray}
where $\varepsilon$ is a small parameter, which will be defined precisely in the next section. 
In this case, the barrier option price (\ref {def_CB}) is 
characterized as a solution of the Cauchy-Dirichlet problem: 
\begin{eqnarray}\label{PDE_intro}
\left\{
\begin{array}{ll}
 	\frac{\partial }{\partial t}u^\varepsilon (t, x) + \mathscr {L}^\varepsilon u^\varepsilon (t, x) = 0,& (t, x)\in [0, T)\times (L, \infty ),	\\
 	u^\varepsilon (T, x) = f(x),& x > L,	\\
 	u^\varepsilon (t, L) = 0,& t\in [0, T], 
\end{array}
\right.
\end{eqnarray}
where the differential operator $\mathscr {L}^\varepsilon $ is determined
by the diffusion coefficients $b$ and $\sigma $. 
Next, we introduce an asymptotic expansion formula: 
\begin{eqnarray}\label{ae_temp}
u^{\varepsilon }(t, x) = u^0(t, x) + \varepsilon v^0_1(t, x) + \cdots  + \varepsilon ^{n-1}v^0_{n-1}(t, x) + 
O(\varepsilon ^n), 
\end{eqnarray}
where $O$ denotes the Landau symbol. 
The function $u^0(t, x)$ is the solution of (\ref {PDE_intro}) with $\varepsilon = 0$: 
if $b(x, 0)$ and $\sigma (x, 0)$ have some simple forms such as constants
(as in the Black-Scholes model), 
we already know the closed form of $u^0(t, x)$ and hence obtain the price.
Then, we are able to get the approximate value for $u^{\varepsilon }(t, x)$ 
through evaluation of the coefficient functions
$v^0_1(t, x), \ldots , v^0_{n-1}(t, x)$. 
In fact, they are also characterized as the solution of 
a certain PDE with the Dirichlet condition.  
By formal asymptotic expansions, 
(\ref {ae_temp}) above and (\ref{ae_generator}) below,
\begin{eqnarray}\label{ae_generator}
\mathscr {L}^{\varepsilon } = 
\mathscr {L}^0 + \varepsilon \tilde{\mathscr {L}}^0_1 + \cdots  + 
\varepsilon ^{n-1}\tilde{\mathscr {L}}^0_{n-1} + \cdots , 
\end{eqnarray}
we can derive the following PDE which $v^0_k(t, x)$ satisfies:
\begin{eqnarray}\label{PDE_v_intro}
\left\{
\begin{array}{ll}
 	\frac{\partial }{\partial t}v^0_k(t, x) + \mathscr {L}^0v^0_k(t, x) + g^0_k(t, x) = 0,& (t, x) \in [0, T)\times (L, \infty ),	\vspace{1mm}\\
 	v^0_k(T, x) = 0,& x > L,	\\
 	v^0_k(t, L) = 0,& t\in [0, T], 
\end{array}
\right.
\end{eqnarray}
where $g^0_k(t, x)$ will be given explicitly in Section 3. 
Moreover, by applying the Feynman-Kac approach to the PDE (\ref{PDE_v_intro}), 
we obtain a semi-group representation of $v^0_k$. That is,
for each $k = 1, \ldots , n - 1$, 
\begin{eqnarray}
&&v^0_k(T - t, x)\nonumber\\
&& = \sum ^k_{l = 1}\sum _{(\beta ^i)^l_{i=1}\subset \Bbb {N}^l, \sum _i\beta ^i = k}
\int ^t_0\int ^{t_1}_0\cdots \int ^{t_{l - 1}}_0
P^D_{t - t_1}\tilde{\mathscr {L}}^0_{\beta ^1}P^D_{t_1 - t_2}\tilde{\mathscr {L}}^0_{\beta ^2}\cdots P^D_{t_{l - 1} - t_l}\tilde{\mathscr {L}}^0_{\beta ^l}P^D_{t_l}f(x)dt_l\cdots dt_1,\nonumber\\
\label{intro_semigroup_formula} 
\end{eqnarray}
where $(P^D_t)_t$ is a semi-group defined in Section 3.
We will justify the above argument in a mathematically rigorous manner
in the sections 2 and 3.

The theory of the Cauchy-Dirichlet problem for this kind of the second order parabolic PDE 
is well understood for the case of bounded domains 
(see \cite{Friedman1}, \cite{Friedman2} and \cite{Lieberman} for instance). 
As for an unbounded domain case such as (\ref {PDE_intro}), 
\cite {Rubio} 
provides the existence and uniqueness results for a solution of the PDE and 
the Feynman-Kac type formula, the part of which will be cited as Theorem \ref {th_viscosity} in Section 2. 
However, some mathematical difficulty exists for
applying the results of \cite {Rubio} to  the PDE (\ref {PDE_v_intro}).
More precisely, the function $g^0_k(t, x)$ may be divergent at $t = T$. 
Hence, in order to obtain an asymptotic expansion (\ref{ae_temp}),
we generalize the result of
 \cite {Rubio} and the argument of the Feynman--Kac representation.
Furthermore, 
we derive a new  representation (\ref{intro_semigroup_formula}) for $v^0_k(t, x)$ 
by using the semi-group $(P^D_t)_t$.
We notice that such a form is convenient for evaluation of $v^0_k(t, x)$ in concrete examples. 

We also apply our method to pricing a barrier option in a stochastic volatility model. 
 
Then, as an example of (\ref{intro_semigroup_formula}) we obtain a new approximation formula 
of the barrier option price $C_\mathrm {Barrier}^{SV,\varepsilon}$ under a stochastic volatility model 
as follows: for the initial value of the logarithmic underlying price $x$, the maturity $T$ and the lower barrier $L$,  
\begin{eqnarray*}
C_\mathrm {Barrier}^{SV,\varepsilon}(T,e^x)&=& \E \left[ f(S^{\varepsilon , x}_T )1_{\{\min _{0\leq t\leq T}S^\varepsilon _t > L\}}\right]\\
&\simeq &
P^{D}_{T}\bar{f}(x)+\varepsilon \int ^T_0P^{D}_{T - r}\tilde{\mathscr {L}}^0_1P^{D}_r \bar{f}(x)dr, 
\end{eqnarray*}
where $(S^{\varepsilon , x}_t)_t$ 
is the underlying asset price process, 
$f$ is a payoff function and $\bar{f}(x)=f(e^x)$. 
Here, $P^{D}_{T}\bar{f}(x)$ is regarded as the down-and-out barrier option price
in the Black-Scholes model. 
Moreover, we confirm practical validity of our method through a numerical example given in Section 4.

Finally, we remark that
there exist the previous works on barrier option pricing 
such as \cite{FPS1}, \cite{FPS2}, \cite {HS}, \cite{IJS},
which start with some specific models
(e.g. Black-Scholes model or some type of fast mean-reversion model),
and derive approximation formulas for discretely or continuously monitoring  barrier option prices.
Our approach is to firstly develop a general 
semi-group expansion scheme for the Cauchy-Dirichlet problem 
under multi-dimensional diffusion setting; then as an application,
we provide a new approximation formula under a certain class of stochastic volatility model.

The organization of this paper is as follows:
the next section prepares the existence and uniqueness result
for the Cauchy-Dirichlet problem
in the second-order parabolic PDE associated with barrier option pricing.
Section 3 presents our main result for an asymptotic expansion of barrier option prices.
Section 4 shows numerical examples under a 
 stochastic volatility model.
Section 5 concludes.
Finally, Appendix A provides the proofs of the results in the main text, and Appendix B
shows some generalization of Section 2 and Section 3.  
\section{Preparation}
\label{sec_main}
This section shows the existence and uniqueness result for the Cauchy-Dirichlet problem
in the second-order parabolic PDE associated with the valuation of barrier options.

Suppose first that the underlying asset price is described by the following perturbed SDE:  
\begin{eqnarray}\label{eq_SDE_epsilon}
\left\{
\begin{array}{l}
 	dX^{\varepsilon , x}_t = b(X^{\varepsilon , x}_t, \varepsilon )dt + \sigma (X^{\varepsilon , x}_t, \varepsilon )dB_t,\\
 	X^{\varepsilon , x}_0 = x, 
\end{array}
\right.
\end{eqnarray}
where $\varepsilon$ is a small parameter. 
Let $b : \Bbb {R}^d\times  I \longrightarrow \Bbb {R}^d$ and 
$\sigma : \Bbb {R}^d\times  I  \longrightarrow \Bbb {R}^d\otimes \Bbb {R}^m$ be 
Borel measurable functions ($d, m\in \Bbb {N}$,) where $I$ is an interval on $\Bbb {R}$ including the origin $0$ 
(for instance $I = (-1, 1)$.) 
We consider the SDE (\ref {eq_SDE_epsilon})
for any $x\in \Bbb {R}^d$ and $\varepsilon \in  I $; 
in the condition [A] below,
we will introduce the assumptions for existence and uniqueness of a solution of (\ref {eq_SDE_epsilon}).

\enlargethispage{5mm}
We are interested in evaluation of the following barrier option price: for a small $\varepsilon $,
\begin{eqnarray}\nonumber 
&&u^{\varepsilon }(t, x)\\\label{def_u_eps}
&=& \E \left [\exp \left( -\int ^{T-t}_0c(X^{\varepsilon , x}_r, \varepsilon )dr 
\right) f(X^{\varepsilon , x}_{T-t})1_{\{ \tau _D(X^{\varepsilon , x}) \geq T-t \}}\right ], \ \ (t, x)\in [0, T]\times \bar{D}\hspace{10mm}
\end{eqnarray}
for Borel measurable functions $f : \Bbb {R}^d \longrightarrow \Bbb {R}$ and 
$c : \Bbb {R}^d\times  I  \longrightarrow \Bbb {R}$, a positive real number $T > 0$ 
and a domain $D\subset \Bbb {R}^d$; 
$\bar{D}\subset \Bbb {R}^d$ is the closure of $D$ and 
$\tau _D(w)$, $w\in C([0, T] ; \Bbb {R}^d)$, stands for the 
first exit time from $D$, that is
\begin{eqnarray*}
\tau _D(w) = \inf \{ t\in [0, T] ; w(t)\notin D \} . 
\end{eqnarray*}

Let us define a second order differential operator $\mathscr {L}^{\varepsilon }$ by 
\begin{eqnarray*}
\mathscr {L}^\varepsilon = 
\frac{1}{2}\sum ^d_{i, j = 1}a^{ij}(x, \varepsilon )\frac{\partial ^2}{\partial x^i\partial x^j} + 
\sum ^d_{i = 1}b^i(x, \varepsilon )\frac{\partial }{\partial x^i} - c(x, \varepsilon ), 
\end{eqnarray*}
where $a^{ij} = \sum ^d_{k = 1}\sigma ^{ik}\sigma ^{jk}$. 
We consider the following Cauchy-Dirichlet problem for a PDE of parabolic type:
\begin{eqnarray}\label{eq_PDE}
\left\{
\begin{array}{ll}
 	\frac{\partial }{\partial t}u^\varepsilon (t, x) + \mathscr {L}^\varepsilon u^\varepsilon (t, x) = 0,& (t, x)\in [0, T)\times D,	\vspace{1mm}\\
 	u^\varepsilon (T, x) = f(x),& x\in D,	\\
 	u^\varepsilon (t, x) = 0,& (t, x)\in [0, T]\times \partial D. 
\end{array}
\right.
\end{eqnarray}

Now we introduce a series of the assumptions necessary for the existence and the uniqueness
of the classical solution of (\ref{eq_PDE}).
\begin{description}
 \item[\mbox{[A]}] \ There is a positive constant $A_1$ such that 
\begin{eqnarray*}
|\sigma ^{ij}(x, \varepsilon )|^2 + |b^i(x, \varepsilon )|^2 \leq A_1(1 + |x|^2), \ \ x \in \Bbb{R}^d, \ \varepsilon \in I, \ i, j = 1, \ldots , d. 
\end{eqnarray*}
Moreover, for each $\varepsilon \in I$ it holds that $\sigma ^{ij}(\cdot , \varepsilon ), b^i(\cdot , \varepsilon )\in \mathcal {L}$ 
for $i, j = 1, \ldots , d$, where 
$\mathcal {L}$ is the set of locally Lipschitz continuous functions defined on $\Bbb{R}^d$:
\beas
\mathcal {L} &=& \{f\in C(\Bbb{R}^d; \Bbb{R});\ 
\mbox{for any compact set $K\subset \Bbb{R}^d$},\\
&&
\ 
\exists C_K>0\ \mbox{such that}\ |f(x)-f(y)| \leq C_K |x-y|, x, y \in K \}
\eeas
\end{description}
\begin{rem}
Note that under [A], the existence and uniqueness of a solution of (\ref {eq_SDE_epsilon}) are guaranteed 
on any filtered probability space equipped with a standard $d$-dimensional Brownian motion, and 
Corollary 2.5.12 in \cite {Krylov} and Lemma 3.2.6 in \cite {Nagai} imply 
\begin{eqnarray}\label{SDE_poly_growth}
\E [\sup _{0\leq r\leq t}|X^{\varepsilon , x}_r - x|^{2l}] \leq C_{l}t^{l - 1}(1 + |x|^{2l}), \ \ (t, x)\in [0, T]\times \Bbb {R}^d, \ l\in \Bbb {N}
\end{eqnarray}
for some $C_{l} > 0$ which depends only on $l$ and $A_1$. 
Moreover, 
$(X^x_r)_r$ has the strong Markov property. 
\end{rem}
\begin{description}
 \item[\mbox{[B]}] \ The function $f(x)$ is continuous on $\bar{D}$ and 
there are $C_f > 0$ and $m\in \Bbb {N}$ such that $|f(x)| \leq C_f(1 + |x|^{2m})$, $x\in \Bbb {R}^d$. 
Moreover, $f(x) = 0$ on $\Bbb {R}^d\setminus D$. 
\end{description}
\begin{rem}
The assumption $[B]$ guarantees the continuity of a solution of  (\ref{eq_PDE})  on the so called parabolic boundary 
$\Sigma = \partial D \times [0,T) \cup \bar{D} \times \{T\}$, 
in addition to the continuity and polynomial growth of $f$.
\end{rem}

\begin{description}
 \item[\mbox{[C]}] \ 
$c(x, \varepsilon )$ is non-negative (i.e. $c(x, \varepsilon )\geq 0$). 
Moreover, for each $\varepsilon \in I$, 
it holds that $c(\cdot , \varepsilon )\in \mathcal {L}$. 
 \item[\mbox{[D]}] \ 
The boundary $\partial D$ has the outside strong sphere property, that is, 
for each $x\in \partial D$ there is a closed ball $E$ such that $E\cap D = \phi $ and $E\cap \bar{D} = \{x\}$. 
\end{description}
\begin{rem}
The assumption $[D]$ provides the regularity of each point in $\partial D$.
($c.f. \cite{Friedman1}$)
Also, \cite{Rubio} points out that $[D]$ with the ellipticity of the matrix $(a^{ij}(x, \varepsilon ))_{ij}$
in $[E]$ below gives
\beas
P( \tau_{D}(X^{\ep,x}) = \tau_{\bar{D}}(X^{\ep,x})) =1.
\eeas
\end{rem}
\begin{description}
 \item[\mbox{[E]}] \ 
The matrix $(a^{ij}(x, \varepsilon ))_{ij}$ is locally elliptic in the sense that 
for each $\varepsilon \in  I $ and compact set $K\subset \Bbb {R}^d$ 
there is a positive number $\mu _{\varepsilon , K}$ such that 
$\sum ^d_{i, j = 1}a^{ij}(x, \varepsilon )\xi ^i\xi ^j \geq \mu _{\varepsilon , K}|\xi |^2 $ 
for any 
$x\in K$ and $\xi \in \Bbb {R}^d$.
\end{description}
\begin{rem}
Note that although the condition [E] (local ellipticity) is necessary for the existence of classical solution of our PDE (See Remark 2.2 in \cite{Rubio}),
the assumption 
can be removed through consideration of viscosity solutions rather than classical solutions 
by applying Theorem 8.2 in \cite{Crandall-Lions-Ishii} and Theorems 4.4.3 and 7.7.2 in \cite {Nagai}. 
Note that we need the additional assumption such that $I\subset [0, \infty )$ by technical reason in this case.
\end{rem}

Under the assumptions [A]-[E] above, we have the following existence and uniqueness result
due to  
Theorem 3.1 in \cite {Rubio}. 
\begin{thm} \ \label{th_viscosity}Assume $[A]$--$[E]$. 
For each $\varepsilon \in  I $, $u^\varepsilon (t, x)$ is a $($classical$)$ solution of $(\ref {eq_PDE})$ and 
\begin{eqnarray}\label{ineq_poly}
\sup _{(t, y)\in [0, T]\times \bar{D}}|u^\varepsilon (t, x)| / (1 + |x|^{2m}) < \infty . 
\end{eqnarray}
Moreover, if $w^\varepsilon (t, x)$ is also a solution of $(\ref {eq_PDE})$ 
satisfying the growth condition 
\begin{eqnarray*}
\sup _{(t, y)\in [0, T]\times \bar{D}}|w^\varepsilon (t, x)| / (1 + |x|^{2m'}) < \infty  
\end{eqnarray*}
for some $m'\in \Bbb {N}$, 
then $u^\varepsilon = w^\varepsilon $. 
\end{thm}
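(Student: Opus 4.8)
The plan is to obtain the assertion as a direct application of Theorem 3.1 in \cite{Rubio}, so the substance of the argument is to verify that the standing assumptions [A]--[E] supply exactly the hypotheses required there, and then to establish the explicit growth bound (\ref{ineq_poly}) by a short moment computation. First I would match the structural conditions: [E] furnishes the (locally uniform) ellipticity of $(a^{ij}(\cdot,\varepsilon))_{ij}$; [A] gives the linear growth and local Lipschitz---hence local H\"older---regularity of the coefficients $\sigma^{ij}$ and $b^i$; and [C] provides the corresponding regularity of the potential $c$ together with the lower bound $c\geq -A_2$ needed to control the discount factor. The boundary regularity is supplied by [D]: the outside strong sphere property guarantees that every point of $\partial D$ is regular for the diffusion, which is what makes the Feynman--Kac functional (\ref{def_u_eps}) attain the zero boundary data continuously. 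Finally [B] matches the terminal-data hypotheses (continuity on $\bar D$, polynomial growth, and vanishing off $D$). With these identifications, Theorem 3.1 in \cite{Rubio} yields both that $u^\varepsilon$ defined by (\ref{def_u_eps}) is a classical solution of (\ref{eq_PDE}) and the uniqueness statement within the polynomial-growth class.

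For the quantitative bound (\ref{ineq_poly}) I would argue directly from the representation (\ref{def_u_eps}). On the event $\{\tau_D(X^{\varepsilon,x})\geq T-t\}$ the path stays in $\bar D$ up to time $T-t$, so by [C] we have $-\int_0^{T-t}c(X^{\varepsilon,x}_r,\varepsilon)\,dr\leq A_2T$ and hence $\exp(-\int_0^{T-t}c\,dr)\leq e^{A_2T}$. Combining this with the polynomial growth of $f$ from [B],
\[
|u^\varepsilon(t,x)|\leq e^{A_2T}C_f\,\E[\,1+|X^{\varepsilon,x}_{T-t}|^{2m}\,],
\]
and then using $|X^{\varepsilon,x}_{T-t}|^{2m}\leq 2^{2m-1}(|X^{\varepsilon,x}_{T-t}-x|^{2m}+|x|^{2m})$ together with the moment estimate (\ref{SDE_poly_growth}) with $l=m$, the right-hand side is bounded by $C(1+|x|^{2m})$ for a constant $C$ depending only on $m$, $T$, $A_1$, $A_2$ and $C_f$. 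Dividing by $1+|x|^{2m}$ and taking the supremum over $(t,x)$ gives (\ref{ineq_poly}).

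The uniqueness assertion, if one wishes to see it independently of \cite{Rubio}, follows from the standard verification argument: for any classical solution $w^\varepsilon$ of polynomial growth of order $2m'$, apply It\^o's formula to $s\mapsto \exp(-\int_0^s c(X^{\varepsilon,x}_r,\varepsilon)\,dr)\,w^\varepsilon(t+s,X^{\varepsilon,x}_s)$ on $[0,(T-t)\wedge\tau_D]$; the equation $\partial_t w^\varepsilon+\mathscr{L}^\varepsilon w^\varepsilon=0$ cancels the finite-variation part, leaving a local martingale. Localizing by the exit times from balls of radius $n$, taking expectations, and letting $n\to\infty$ with (\ref{SDE_poly_growth}) and the growth of $w^\varepsilon$ to justify the limit, one recovers precisely (\ref{def_u_eps}), using $w^\varepsilon(T,\cdot)=f$ on $\{\tau_D\geq T-t\}$ and $w^\varepsilon=0$ on $\partial D$ on the complementary event; hence $w^\varepsilon=u^\varepsilon$.

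The main obstacle I anticipate is not the computation but the hypothesis-matching of the first paragraph: one must confirm that the unbounded-domain framework of \cite{Rubio} genuinely accommodates coefficients that are merely locally Lipschitz and of linear growth (rather than bounded) and a diffusion matrix that is only locally uniformly elliptic for general $\varepsilon$, and that the regularity it requires of the data is implied by continuity plus polynomial growth in [B]. The delicate point throughout is the continuous attainment of the Dirichlet data at $\partial D$, which is exactly where the outside strong sphere property [D] is used; verifying that this geometric condition yields regularity of all boundary points in the sense of \cite{Rubio} is the step that genuinely carries the argument.
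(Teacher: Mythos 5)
Your proposal is correct and takes essentially the same route as the paper: the paper's entire proof of Theorem \ref{th_viscosity} is the one-line citation of Theorem 3.1 in \cite{Rubio}, which is precisely the reduction you carry out, with your matching of [A]--[E] to Rubio's hypotheses being the substance of that citation. Your additional direct derivation of (\ref{ineq_poly}) from (\ref{SDE_poly_growth}) and the It\^o-based verification argument for uniqueness are sound but supply detail the paper simply delegates to \cite{Rubio}.
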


\section{Asymptotic Expansion of Barrier Option Price}\label{AE}
Our purpose is to present an asymptotic expansion of the barrier option price $u^\varepsilon (t, x)$:
\begin{eqnarray}
\label{eq_ae1}
u^\varepsilon (t, x) = u^0(t, x) + \varepsilon v^0_1(t, x) + \cdots  + \varepsilon ^{n-1}v^0_{n-1}(t, x) + O(\varepsilon ^n), \ \ \varepsilon \rightarrow 0.
\end{eqnarray}
Here, the coefficient function $v^0_k(t, x)$, $k = 1, \ldots , n - 1$ are (formally) given as the solution of 
\begin{eqnarray}\label{eq_PDE2}
\left\{
\begin{array}{ll}
 	\frac{\partial }{\partial t}v^0_k(t, x) + \mathscr {L}^0v^0_k(t, x) + g^0_k(t, x) = 0,& (t, x) \in [0, T)\times D,	\vspace{1mm}\\
 	v^0_k(T, x) = 0,& x\in D,	\\
 	v^0_k(t, x) = 0,& (t, x)\in [0, T]\times \partial D, 
\end{array}
\right.
\end{eqnarray}
where $g^0_k(t, x)$ is given inductively by 
\begin{eqnarray}\label{def_g}
g^0_k(t, x) = \tilde{\mathscr {L}}^0_ku^0(t, x) + 
\sum ^{k-1}_{l = 1}\tilde{\mathscr {L}}^{0}_{k-l}v^{0}_l(t, x),
\end{eqnarray}
where $\tilde{\mathscr {L}}^0_k$ is defined as follows:
\begin{eqnarray}\label{g_expansion}
\tilde{\mathscr {L}}^0_k = 
\frac{1}{k!}\left\{ 
\frac{1}{2}\sum ^d_{i, j = 1}\frac{\partial ^ka^{ij}}{\partial \varepsilon ^k}(x, 0)\frac{\partial ^2}{\partial x^i\partial x^j} + 
\sum ^d_{i = 1}\frac{\partial ^kb^i}{\partial \varepsilon ^k}(x, 0)\frac{\partial }{\partial x^i} - 
\frac{\partial ^kc}{\partial \varepsilon ^k}(x, 0) \right\}, 
\end{eqnarray}

To study the asymptotic expansion,
we put the following assumptions in addition to [A]--[E].
Firstly, by the next condition we can properly define $\tilde{\mathscr {L}}^0_k$, $k\in \Bbb {N}$ in (\ref{g_expansion}) above.
\begin{description}
 \item[\mbox{[F]}] \ Let $n\in \Bbb {N}$. The functions $a^{ij}(x, \varepsilon )$, 
$b^i(x, \varepsilon )$ and $c(x, \varepsilon )$ 
are $n$-times continuously differentiable in $\varepsilon $. 
Furthermore, each of derivatives 
$\partial ^ka^{ij}/\partial \varepsilon ^k$, $\partial ^kb^i/\partial \varepsilon ^k$, 
$\partial ^kc/\partial \varepsilon ^k$, 
$k = 1, \ldots , n - 1$, has a polynomial growth rate in $x\in \Bbb {R}^d$ 
uniformly in 
$\varepsilon \in  I $. 
\end{description}

To state the existence of the functions $v^0_k(t, x)$ in the asymptotic expansion (\ref{eq_ae1}), 
we first prepare the following set. 
\begin{defn}\label{def1}
The set $\mathcal {H}^{m, p}$ of $g\in C([0, T)\times \bar{D})$ 
is defined to satisfy the following condition: 
There is some $M^g\in C([0, T))\cap L^p([0, T), dt)$ such that 
\begin{eqnarray}\label{def_H_map_growth}
|g(t, x)| \leq M^g(t)(1 + |x|^{2m}), \ \ t\in [0, T), \ x, y\in \bar{D}.
\end{eqnarray}
\end{defn}
Given this definition of the set $\mathcal {H}^{m, p}$, we put the next condition on $u^0$. 
\begin{description}
 \item[\mbox{[G]}] \ $u^0\in \mathcal {G}^{m}$, where 
\begin{eqnarray*}
\mathcal {G}^{m} &=& 
\Big \{ g\in C^{1, 2}([0, T)\times D)\cap C([0, T]\times \bar{D})\ ; \\&&\hspace{40mm}
\frac{\partial g}{\partial x^i}\in \mathcal {H}^{m, 2}, \ 
\frac{\partial ^2g}{\partial x^i\partial x^j}\in \mathcal {H}^{m, 1}, \ 
i, j = 1, \ldots , d\Big \}. 
\end{eqnarray*}
\end{description}

Now we examine the conditions necessary for the classical solution
to the PDE (\ref {eq_PDE2}). 
Let us start with the case of $k = 1$. 
By the assumption [G], we have
$g^0_1\in \mathcal {H}^{m, 1}$ 
for some $m\in \Bbb {N}$ by the definition of $g^0_k$ with $k = 1$ in (\ref{def_g}).
Thus we can define 
\begin{eqnarray}\label{form_FK_v1}
v^0_1(t, x) = \E \left[ \int ^{(T-t)\wedge \tau _D(X^{0, x})}_0
\exp \left( -\int ^r_0c(X^{0, x}_v, 0)dv
\right) g^0_1(t + r, X^{0, x}_r)dr\right ] . 
\end{eqnarray}
Therefore, if we assume that $v^0_1\in C^{1, 2}([0, T)\times D)$ 
we can show that $v^0_1$ is the solution of (\ref {eq_PDE2}) with $k=1$, 
that is, we can confirm that 
\begin{eqnarray*}
\frac{\partial }{\partial t}v^0_1(t, x) + \mathscr {L}^0v^0_1(t, x) + g^0_1(t, x) = 0. 
\end{eqnarray*}
Note that the relations $v^0_1(T, \cdot ) = 0$ and $v^0_1 = 0$ on $[0, T]\times \partial D$ are obvious. 

Next, let us give some comments on the smoothness of $v^0_1$. 
In many cases as in the Black--Scholes model
(see (\ref{density})
in Section 4).
we can rewrite (\ref{form_FK_v1})
as 
\begin{eqnarray*}
v^0_1(t, x) = \int ^{T-t}_0\int _Dg^0_1(t+r, y)p(r, x, y)dydr 
\end{eqnarray*}
for some $p(r, x, y)$. 
Thus, if $p$ has a ``good'' smoothness property, the smoothness of $v^0_1$ also holds such as 
\begin{eqnarray*}
\frac{\partial }{\partial t}v^0_1(t, x) &=& - \lim _{s\rightarrow T}\int _Dg^0_1(s, y)p(s-t, x, y)dy + 
\int ^t_0\int _D\frac{\partial }{\partial t}g^0_1(t+r, y)p(r, x, y)dr, \\
\frac{\partial }{\partial x^i}v^0_1(t, x) &=& 
\int ^t_0\int _Dg^0_1(t+r, y)\frac{\partial }{\partial x^i}p(r, x, y)dr, \\
\frac{\partial ^2}{\partial x^i\partial x^j}v^0_1(t, x) &=& 
\int ^t_0\int _Dg^0_1(t+r, y)\frac{\partial ^2}{\partial x^i\partial x^j}p(r, x, y)dr. 
\end{eqnarray*}
Moreover, if $v^0_1$ is in $\mathcal {G}^{m_1}$ for some $m_1\in \Bbb {N}$,
we also have $g^0_2\in \mathcal {H}^{\tilde{m}_1, 1}$ for some $\tilde{m}_1\in \Bbb {N}$ 
by the definition of $g^0_k$ with $k = 2$ in (\ref{def_g}). Then, we can define $v^0_2$ similarly as $v^0_1$. 
Furthermore, under some suitable smoothness conditions for $v^0_2$, which may be given by
the smoothness property of $p(r,x,y)$,
we are able to show that $v^0_2$ is the classical solution of (\ref {eq_PDE2}) with $k = 2$. 

Thus, the observation above leads us to our final assumption.
\begin{description}
 \item[\mbox{[H]}] \ It holds that $v^0_k\in \mathcal {G}^{m_n}$, $k = 1, \ldots , n - 1$ for some $m_n\in \Bbb {N}$, where 
\begin{eqnarray}\label{form_FK}
v^0_k(t, x) = \E \left[ \int ^{(T-t)\wedge \tau _D(X^{0, x})}_0
\exp \left( -\int ^r_0c(X^{0, x}_v, 0)dv
\right) g^0_k(t + r, X^{0, x}_r)dr\right ] . 
\end{eqnarray}
\end{description}
Then, we can show the next result. The proof is given in Section \ref{sec_proof_th2} of Appendix. 
\begin{thm}\label{th_classical_vk}
Assume $[A]$--$[H]$. 
Then, for each $k = 1, \ldots , n-1$,  $v^0_k$ is the classical solution of $(\ref {eq_PDE2})$ 
and satisfies 
\begin{eqnarray}\label{est_v0k}
|v^0_k(t, x)| \leq C_k(1 + |x|^{2m_k}), \ \ (t, x)\in [0, T]\times \Bbb {R}^d
\end{eqnarray}
for some $C_k, m_k > 0$. 
\end{thm}

Note that the uniqueness of the solutions of (\ref {eq_PDE2}) follows from the same arguments 
as in the proof of 
Theorem 5.7.6 in \cite {Karatzas-Shreve}. That is, we obtain the next proposition. 
\begin{prop} \ \label{prop_uniqueness_v2}
For any function $g$ which has a polynomial growth rate in $x$ uniformly in $t$, 
a classical solution of $(\ref {eq_PDE2})$ is unique in the following sense: \ 
if $v$ and $w$ are classical solutions of $(\ref {eq_PDE2})$ and 
$|v(t, x)| + |w(t, x)| \leq C(1 + |x|^{2m})$ for some $C, m > 0$, then $v = w$. 
\end{prop}

Now, we are able to state our first main result on the asymptotic expansion. The proof is given in Section \ref{sec_proof}
of Appendix.
\begin{thm} \ \label{th_main} 
Assume $[A]$--$[H]$. 
There are positive constants $C_n$ and $\tilde{m}_n$ which are independent of $\varepsilon $ such that 
\begin{eqnarray*}
\left |u^\varepsilon (t, x) - (u^0(t, x) + \sum ^{n-1}_{k = 1}\varepsilon ^kv^0_k(t, x))\right | \leq C_n(1 + |x|^{2\tilde{m}_n})\varepsilon ^n, \ \ 
(t, x)\in [0, T]\times \bar{D}. 
\end{eqnarray*}
\end{thm}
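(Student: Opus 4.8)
The plan is to show that the truncated expansion $\hat u^\varepsilon := u^0 + \sum_{k=1}^{n-1}\varepsilon^k v^0_k$ solves the Cauchy--Dirichlet problem $(\ref{eq_PDE})$ for $\mathscr{L}^\varepsilon$ up to a source term of order $\varepsilon^n$, and then to control the resulting remainder $R^\varepsilon_n := u^\varepsilon - \hat u^\varepsilon$ by a Feynman--Kac representation. First I would substitute $\hat u^\varepsilon$ into $\partial_t + \mathscr{L}^\varepsilon$. Writing $v^0_0 := u^0$ and using Taylor's theorem in $\varepsilon$ for the coefficients, $\mathscr{L}^\varepsilon v^0_l = \sum_{j=0}^{n-1}\varepsilon^j\tilde{\mathscr{L}}^0_j v^0_l + \mathcal{E}^\varepsilon_l$, where $\tilde{\mathscr{L}}^0_0 = \mathscr{L}^0$ and $\mathcal{E}^\varepsilon_l$ collects the coefficient Taylor remainders applied to $v^0_l$. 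Collecting powers of $\varepsilon$ and invoking the defining equations --- $(\partial_t + \mathscr{L}^0)u^0 = 0$ from $(\ref{eq_PDE})$ and $(\partial_t + \mathscr{L}^0)v^0_k + g^0_k = 0$ from $(\ref{eq_PDE2})$ together with the inductive definition $(\ref{def_g})$ of $g^0_k$ --- every coefficient of $\varepsilon^0,\dots,\varepsilon^{n-1}$ cancels exactly, so that
$$(\partial_t + \mathscr{L}^\varepsilon)\hat u^\varepsilon = \varepsilon^n G^\varepsilon, \qquad \varepsilon^n G^\varepsilon = \sum_{\substack{0\le j,\,l\le n-1\\ j+l\ge n}}\varepsilon^{j+l}\tilde{\mathscr{L}}^0_j v^0_l + \sum_{l=0}^{n-1}\varepsilon^l\mathcal{E}^\varepsilon_l.$$
Since the terminal and boundary data of $\hat u^\varepsilon$ match those of $u^\varepsilon$ (because $u^0(T,\cdot)=f$, $v^0_k(T,\cdot)=0$, and all terms vanish on $\partial D$), the function $R^\varepsilon_n$ solves $(\ref{eq_PDE})$ for $\mathscr{L}^\varepsilon$ with zero terminal and boundary values and source $-\varepsilon^n G^\varepsilon$.

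Second, I would establish a bound $|G^\varepsilon(t,x)| \le M_G(t)(1+|x|^{2\tilde m_n})$ with $M_G\in L^1([0,T))$ and $M_G, \tilde m_n$ independent of $\varepsilon$ on a neighbourhood $[0,\varepsilon_0]$ of $0$. The surplus terms $\tilde{\mathscr{L}}^0_j v^0_l$ with $j+l\ge n$ are handled by [G] (polynomial growth of $\partial_\varepsilon^j a^{ij}, \partial_\varepsilon^j b^i, \partial_\varepsilon^j c$ for $j\le n-1$) together with [I] (so that the spatial derivatives of $v^0_l$ lie in $\mathcal{H}^{m_n,\alpha,p}$, giving growth $M(t)(1+|x|^{2m_n})$ with $M\in L^p\subset L^1$), and each carries a spare power $\varepsilon^{j+l-n}\ge 0$ bounded on $[0,\varepsilon_0]$. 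The coefficient remainders in $\mathcal{E}^\varepsilon_l$ arise from Taylor's formula applied to $a^{ij}(x,\cdot), b^i(x,\cdot), c(x,\cdot)$; controlling these is the delicate point. One must show that the $n$-th order Taylor remainder is $O(\varepsilon^n)$ with polynomial growth in $x$ uniform in $\varepsilon$, which rests on the differentiability assumption [G]; combined with $(\ref{order_v})$ and the $\mathcal{G}$-membership of the $v^0_l$, this controls $\mathcal{E}^\varepsilon_l$.

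Finally, I would represent $R^\varepsilon_n$ probabilistically. Since $u^\varepsilon$ is a classical solution with the polynomial bound $(\ref{ineq_poly})$ and each $v^0_k$ is classical with $(\ref{order_v})$, $R^\varepsilon_n$ is $C^{1,2}([0,T)\times D)\cap C([0,T]\times\bar D)$ and has polynomial growth. Applying It\^o's formula to $\exp(-\int_0^r c(X^{\varepsilon,x}_v,\varepsilon)dv)\,R^\varepsilon_n(t+r,X^{\varepsilon,x}_r)$ up to $(T-t)\wedge\tau_D(X^{\varepsilon,x})$ and taking expectations --- exactly as in the representation $(\ref{form_FK})$, the local martingale being a true martingale thanks to $(\ref{SDE_poly_growth})$ --- yields
$$R^\varepsilon_n(t,x) = \varepsilon^n\,\E\Big[\int_0^{(T-t)\wedge\tau_D(X^{\varepsilon,x})}\exp\Big(-\int_0^r c(X^{\varepsilon,x}_v,\varepsilon)\,dv\Big)G^\varepsilon(t+r,X^{\varepsilon,x}_r)\,dr\Big].$$
Bounding the discount factor by $e^{A_2 T}$ via $c\ge -A_2$ from [C], using the growth bound on $G^\varepsilon$, and applying $(\ref{SDE_poly_growth})$ to get $\E[1+|X^{\varepsilon,x}_r|^{2\tilde m_n}]\le C(1+|x|^{2\tilde m_n})$ uniformly in $\varepsilon$, I obtain $|R^\varepsilon_n(t,x)|\le e^{A_2 T}C(1+|x|^{2\tilde m_n})\big(\int_0^T M_G(s)\,ds\big)\varepsilon^n = C_n(1+|x|^{2\tilde m_n})\varepsilon^n$, which is the claim. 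The main obstacle is the uniform-in-$\varepsilon$ growth control of the residual $G^\varepsilon$ in the second step, in particular the coefficient Taylor remainders and the time-singularity of the $v^0_l$ near $t=T$, the latter being absorbed by the $L^p$-in-time structure built into $\mathcal{H}^{m,\alpha,p}$.
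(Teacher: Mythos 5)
Your proposal is correct and is essentially the paper's own argument: the paper defines the scaled remainder $v^\varepsilon_n=\varepsilon^{-n}\bigl(u^\varepsilon-u^0-\sum_{k=1}^{n-1}\varepsilon^k v^0_k\bigr)$, shows in Proposition \ref{prop_v} that it solves the inhomogeneous Cauchy--Dirichlet problem (\ref{eq_PDE2_eps}) with source $g^\varepsilon_n$ --- your $G^\varepsilon$, with the coefficient Taylor remainders written in the explicit integral form $\tilde{\mathscr{L}}^\varepsilon_n$, which is exactly what yields your uniform-in-$\varepsilon$ polynomial bound (\ref{poly_g_eps}) from [G] --- and then identifies $v^\varepsilon_n$ with the Feynman--Kac functional $\tilde{v}^\varepsilon_n$ via a Lamberton--Lapeyre-type verification (your It\^o step), concluding with the moment estimate (\ref{SDE_poly_growth}). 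The only cosmetic differences are that you work with the unscaled remainder $R^\varepsilon_n$ and apply It\^o's formula directly instead of stating the representation and uniqueness as a separate proposition.
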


Next,
we construct a semi-group corresponding to $(X^{0, x}_t)_t$
(that is, $(X^{\varepsilon, x}_t)_t$ with $\varepsilon=0$) and $D$.
Then, based on this semi-group
we can obtain more explicit representation for the coefficient function $v^0_k(t, x)$ 
than the right hand side of (\ref {form_FK}) .

Let $C^0_b(\bar{D})$ be the set of bounded continuous functions $f : \bar{D}\longrightarrow \Bbb {R}$ 
such that $f(x) = 0$ on $\partial D$. 
Obviously, $C^0_b(\bar{D})$ equipped with the sup-norm becomes a Banach space.

For $t\in [0, T]$ and $f\in C^0_b(\bar{D})$, we define 
$P^D_tf : \bar{D}\longrightarrow \Bbb{R}$ by 
\begin{eqnarray}\label{semigroup_rep}
P^D_tf(x) = \E \left[
\exp \left( -\int ^t_0c(X^{0, x}_v, 0)dv\right)
f(X^{0, x}_t)1_{\{\tau _D(X^{0, x})\geq t\}}\right],
\end{eqnarray}
where 
$c(x, 0)$ is non-negative.
We notice that $P^D_tf(x)$ is equal to $u^0(T - t, x)$ with the payoff function $f$. 
Then, we have the following result:
\begin{prop}
\ \label{th_semigroup}
Under the assumptions [A]--[E], 
the mapping $P^D_t : C^0_b(\bar{D})\longrightarrow C^0_b(\bar{D})$ is well-defined 
and $(P^D_t)_{0\leq t\leq T}$ is a contraction semi-group. 
\end{prop}

\begin{proof} 
Let $f\in C^0_b(\bar{D})$. 
The relations $P^D_0f = f$, $P^D_tf|_{\partial D} = 0$ and 
$\sup _{\bar{D}}|P^D_tf| \leq \sup _{\bar{D}}|f|$ are obvious. 
The continuity of $P^D_tf$ is by Lemma 4.3 in \cite {Rubio}. 
The semi-group property is verified by a straightforward calculation. 
\end{proof}
\begin{rem}
Note that $(P^D_t)_t$ also has the semi-group property on the set $C^0_p(\bar{D})$ of 
continuous functions $f$, each of which has a polynomial growth rate and satisfies $f(x) = 0$ on $\partial D$. 
\end{rem}

Finally, we show our second main result
on the semi-group representation of the coefficient function $v^0_k$ in the expansion, 
whose proof is given in Section \ref{proof_form_semigroup} in Appendix.
\begin{thm} \ \label{th_form_semigroup}
Under Assumptions [A]--[H], 
for each $k = 1, \ldots , n - 1$ 
\begin{eqnarray}
&&v^0_k(T - t, x)\nonumber\\
&& = \sum ^k_{l = 1}\sum _{(\beta ^i)^l_{i=1}\subset \Bbb {N}^l, \sum _i\beta ^i = k}
\int ^t_0\int ^{t_1}_0\cdots \int ^{t_{l - 1}}_0
P^D_{t - t_1}\tilde{\mathscr {L}}^0_{\beta ^1}P^D_{t_1 - t_2}\tilde{\mathscr {L}}^0_{\beta ^2}\cdots P^D_{t_{l - 1} - t_l}\tilde{\mathscr {L}}^0_{\beta ^l}P^D_{t_l}f(x)dt_l\cdots dt_1.\nonumber\\
\label{semigroup_formula} 
\end{eqnarray}
\end{thm}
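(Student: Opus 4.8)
The plan is to argue by induction on $k$, starting from the Feynman--Kac representation (\ref{form_FK}) together with the identity
\begin{eqnarray*}
v^0_k(T - t, x) = \int ^t_0 P^D_r g^0_k(T - t + r, \cdot )(x)\,dr,
\end{eqnarray*}
which follows from (\ref{form_FK}) and the semi-group identity established for $g\in \mathcal{H}^{m, \alpha , p}$ just above (\ref{form_semi_0}), after replacing $t$ by $T - t$. The base case $k = 1$ is exactly (\ref{form_semi_1}), so suppose (\ref{semigroup_formula}) holds for every index smaller than $k$. Substituting the inductive definition (\ref{def_g}) of $g^0_k$ and recalling $u^0(T - s, \cdot ) = P^D_s f$, the first term $\tilde{\mathscr{L}}^0_k u^0$ contributes
\begin{eqnarray*}
\int ^t_0 P^D_r \tilde{\mathscr{L}}^0_k P^D_{t - r}f(x)\,dr = \int ^t_0 P^D_{t - t_1}\tilde{\mathscr{L}}^0_k P^D_{t_1}f(x)\,dt_1
\end{eqnarray*}
after the substitution $t_1 = t - r$; this is precisely the $l = 1$, $\beta ^1 = k$ summand on the right-hand side of (\ref{semigroup_formula}).

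For the remaining terms $\tilde{\mathscr{L}}^0_{k - l}v^0_l$ with $1 \leq l \leq k - 1$, I would insert the inductive formula for $v^0_l(T - (t - r), \cdot )$ into $\int ^t_0 P^D_r \tilde{\mathscr{L}}^0_{k - l}(\cdot )\,dr$. Each composed operator then reads
\begin{eqnarray*}
P^D_r \tilde{\mathscr{L}}^0_{k - l} P^D_{(t - r) - s_1}\tilde{\mathscr{L}}^0_{\gamma ^1}P^D_{s_1 - s_2}\cdots \tilde{\mathscr{L}}^0_{\gamma ^{l'}}P^D_{s_{l'}}f(x),
\end{eqnarray*}
integrated over $\{0 \leq s_{l'}\leq \cdots \leq s_1 \leq t - r,\ 0\leq r \leq t\}$, where $(\gamma ^i)^{l'}_{i = 1}$ runs over the compositions of $l$. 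The key manipulation is the change of variables $t_1 = t - r$ and $t_{j + 1} = s_j$ ($j = 1, \ldots , l'$), which has unit Jacobian and turns this simplex into $\{0 \leq t_{l' + 1}\leq \cdots \leq t_1 \leq t\}$, while matching the elapsed times so that the operator string becomes $P^D_{t - t_1}\tilde{\mathscr{L}}^0_{k - l}P^D_{t_1 - t_2}\tilde{\mathscr{L}}^0_{\gamma ^1}\cdots P^D_{t_{l' + 1}}f$. Hence each $l$-term produces exactly the summands of (\ref{semigroup_formula}) of length $L = l' + 1 \geq 2$ whose first block equals $\beta ^1 = k - l$.

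It then remains to check that the first term and the $l$-terms together reproduce the full index set of (\ref{semigroup_formula}). Given any composition $(\beta ^1, \ldots , \beta ^L)$ of $k$, the case $L = 1$ comes from the $\tilde{\mathscr{L}}^0_k u^0$ term, while for $L \geq 2$ one sets $l = k - \beta ^1 \in \{1, \ldots , k - 1\}$ and reads off $(\beta ^2, \ldots , \beta ^L)$ as the composition of $l$ supplied by the inductive hypothesis; this assignment is a bijection, so every summand appears exactly once and the induction closes. The routine points to be verified along the way are the applicability of Fubini's theorem and of differentiation under the integral sign, together with the well-definedness of each operator string; these follow from assumptions [H]--[I], which keep the relevant functions in $\mathcal{H}^{m, \alpha , p}$ (resp. $\mathcal{G}^{m, \alpha , p}$) so that $\tilde{\mathscr{L}}^0_\beta$ maps into a space on which the extended semi-group $(P^D_r)_r$ on $C^0_p(\bar{D})$ acts. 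The main obstacle is thus purely combinatorial bookkeeping, namely correctly aligning the nested time-integration domains under the change of variables and verifying the composition-counting bijection, rather than any genuine analytic difficulty.
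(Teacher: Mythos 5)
Your proof is correct and follows essentially the same route as the paper: induction on $k$, starting from the semi-group representation $v^0_k(T-t,x)=\int_0^t P^D_r g^0_k(T-t+r,\cdot)(x)\,dr$, substituting the definition (\ref{def_g}) of $g^0_k$ together with the inductive hypothesis, and re-indexing the nested time integrals so that the summands match the compositions $(\beta^1,\ldots,\beta^l)$ of $k$. Your explicit statement of the composition-counting bijection and of the change of variables with unit Jacobian is in fact tidier than the paper's computation, but it is the same argument.
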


\section{Application to Barrier Option Pricing in Stochastic Volatility
Environment}\label{sec_examples}
This section demonstrates the effectiveness of our method
in stochastic volatility environment: Section 4.1 derives concrete approximation formulas,
and Section 4.2 shows numerical examples.
\subsection{Approximation 
of Barrier Option Prices in a Stochastic Volatility Model}
We consider the following stochastic volatility model.
\begin{eqnarray}\label{SVdrift}
dS_{t}^{\varepsilon}&=&(c-q) S_{t}^{\varepsilon} dt+\sigma_t^{\varepsilon} S_{t}^{\varepsilon} dB_{t}^1, \ S_0^{\varepsilon}=S,\\
d\sigma_{t}^{\varepsilon}&=&\varepsilon \lambda (\theta-\sigma_t^{\varepsilon})dt
+\varepsilon \nu \sigma_{t}^{\varepsilon}(\rho dB_{t}^1+\sqrt{1-\rho^2} dB_{t}^2), \ \sigma_{0}^{\varepsilon}=\sigma,\nn 
\end{eqnarray}
where $c, q >0$, $\varepsilon \in [0,1)$, $\lambda, \theta, \nu>0$, $\rho \in [-1,1]$ and $B=(B^1,B^2)$ is a two dimensional Brownian motion. 
Here $c$ and $q$ represent a domestic interest rate and a foreign interest rate, respectively when we consider the currency options. 
Clearly, applying It\^o's formula, we have its logarithmic process:
\begin{eqnarray}\label{log_process}
dX_{t}^{\varepsilon}&=&(c-q -\frac{1}{2}(\sigma_t^{\varepsilon})^2) dt+\sigma_t^{\varepsilon} dB_{t}^1, \ X_{0}^{\varepsilon}=x=\log S, \\
d\sigma_{t}^{\varepsilon}&=&\varepsilon \lambda (\theta-\sigma_t^{\varepsilon})dt+\varepsilon \nu\sigma_{t}^{\varepsilon}(\rho dB^1_t + \sqrt{1 - \rho ^2}dB^2_t), \ \sigma_{0}^{\varepsilon}=\sigma.\nn
\end{eqnarray}
Also, its generator is expressed as
\begin{eqnarray}
\mathscr {L}^{\varepsilon}=\left(c-q -\frac{1}{2}\sigma^2 \right)\frac{\partial}{\partial x}+\frac{1}{2}\sigma^2\frac{\partial^2}{\partial x^2}
+\varepsilon \rho \nu\sigma^2 \frac{\partial^2}{\partial x \partial \sigma}+\varepsilon \lambda (\theta-\sigma)\frac{\partial}{\partial \sigma}
+\varepsilon^2 \frac{1}{2}\nu^2\sigma^2 \frac{\partial^2}{\partial  \sigma^2}.\label{Gene_ep}
\end{eqnarray}
In this case, $\tilde{\mathscr {L}}_1^0$ which is defined by (\ref{g_expansion}) with $k=1$ is given as 
\begin{eqnarray}
\tilde{\mathscr {L}}_1^0 &=&\rho \sigma^2{\partial^2\over \partial x \partial \sigma}+\lambda (\theta-\sigma)\frac{\partial}{\partial \sigma}. \label{Gene_1}
\end{eqnarray}
We will apply the asymptotic expansion in the previous section
to (\ref{log_process})
and give an approximation formula for 
a barrier option price, which is given under a risk-neutral probability measure as
\begin{eqnarray*}
C_\mathrm {Barrier}^{SV,\varepsilon}(T-t,e^x)
= \E \left[e^{-c(T-t)} {f}(S^{\varepsilon,e^x} _{T-t} ) 1_{\{\tau _{(L,\infty)}(S^{\varepsilon, e^x}) > T-t\}}\right],
\end{eqnarray*}
where $f$ stands for a 
payoff function  and $L(< S)$ is a barrier price.

Then, $u^{\varepsilon}(t,x)=C_\mathrm {Barrier}^{SV,\varepsilon}(T-t,e^x)$ satisfies the following PDE:
\begin{eqnarray}\label{SV_PDE_drift}
\left\{ 
\begin{array}{ll}
\left(\frac{\partial}{\partial t}+\mathscr {L}^{\varepsilon}-c \right)u^{\varepsilon}(t,x) = 0, & (t, x)\in (0, T]\times D, \\
u^{\varepsilon}(T,x) = \bar{f}(x), & x \in \bar{D}, \\
u^{\varepsilon}(t,l) = 0, & t\in [0, T]. 
\end{array}
\right. 
\end{eqnarray}
where $\bar{f}(x)=\max \{ e^x - K, 0 \}$, $D = (l, \infty )$ and $l=\log L$.
We obtain the $0$-th order  $u^0$ as
\begin{eqnarray}
u^{0}(t,x)=P^{D}_{T-t}{\bar f}(x)=\E[e^{-c(T-t)}{\bar f}(X_{T-t}^{x,0})1_{\{\tau _{D}(X^{0, x}) > T-t\}} ]. 
\end{eqnarray}
\begin{rem}
$u^0$ satisfies the PDE (\ref{SV_PDE_drift}) with $\varepsilon = 0$. 
Although the condition [E] in Section 2 does not seem to be satisfied in this case, 
the volatility process $(\sigma_{t}^{\varepsilon})_t$ becomes a constant $\sigma > 0$, and
so (\ref{log_process}) is reduced to a one-dimensional SDE. 
Then, (\ref{SV_PDE_drift}) with $\varepsilon = 0$ becomes a non-degenerating PDE with fixed $\sigma$. 
Therefore, we need not take care of the lack of the condition [E] in this example. 
\end{rem}

Setting $\alpha=c-q$, we note that 
${P}^{D}_{T-t}{\bar f}(x)=C^{BS}_\mathrm{Barrier}(T-t ,e^x,\alpha,\sigma,L)$ 
is the price of the down-and-out barrier call option under the Black-Scholes model:
\begin{eqnarray}
C^{BS}_\mathrm{Barrier}(T-t ,e^x,\alpha,\sigma,L)=C^{BS}\l(T-t , {e^x},\alpha,\sigma \r)-
\left( \frac{e^x}{L} \right)^{1-\frac{2\alpha }{\sigma^2}} 
C^{BS}\l(T-t,\frac{L^2}{e^x},\alpha,\sigma \r). 
\label{BS_barrier_drift}
\end{eqnarray}
Here, we recall that the price of the plain vanilla option under the Black-Scholes model is given as
\begin{eqnarray}
C^{BS}\l(T-t, {e^x},\alpha,\sigma \r)=e^{-q (T-t)}e^x N(d_{1}(T-t,x,\alpha))-e^{-c(T-t)}KN(d_2(T-t,x,\alpha)),
\end{eqnarray}
where
\begin{eqnarray*}
d_{1}(t,x,\alpha)&=&\frac{x-\log K+\alpha t}{\sigma \sqrt{t}}+\frac{1}{2}\sigma \sqrt{t}, \\
d_{2}(t,x,\alpha)&=&d_{1}(t,x,\alpha)-\sigma \sqrt{t}\\
N(x)&=&\int_{-\infty}^{x}n(y)dy,\\
n(y)&=&\frac{1}{\sqrt{2 \pi}}e^{\frac{-y^2}{2}}.
\end{eqnarray*}

Note also that
\begin{eqnarray*}
P(\tau _D(X^{0, x}) \geq t | X^{0, x}_t) = 
1 - \exp \left( -\frac{2(x - l)(X^{0, x}_t - l)}{\sigma ^2t}\right) \ \ 
\mathrm {on} \ \{ X^{0, x}_t > l \} . 
\end{eqnarray*}
Therefore, for $g\in C^0_p(\bar{D})$ we have 
\begin{eqnarray}\label{eq_semi1}
P^D_tg(x) = \E [P(\tau _D(X^{0, x}) \geq t | X^x_t)e^{-ct} g(X^{0, x}_t)1_{\{ X^{0, x}_t > l\} }] = 
\int ^\infty _l e^{-ct} g(y)p(t, x, y)dy, 
\end{eqnarray}
where 
\begin{eqnarray}\label{density}
p(t, x, y) &=& \frac{1}{\sqrt{2\pi \sigma ^2t}}
(1 - e^{-\frac{2(x-l)(y-l)}{\sigma ^2t}})e^{-\frac{(y - x - \mu t)^2}{2\sigma ^2t}},\\
\mu &=& \alpha -\sigma^2/2 = (c-q-\sigma^2/2). \nn
\end{eqnarray}

Then, we show the following main result in this section.
\begin{thm} \ \label{DOBarrier2}
We obtain an approximation formula for the down-and-out barrier call option under the stochastic volatility model (\ref{SVdrift}): 
\begin{eqnarray}
C_\mathrm {Barrier}^{SV,\varepsilon}(T,e^x) = C^{BS}_\mathrm{Barrier}(T ,e^x,\alpha,\sigma,L)
+\varepsilon v^0_1(0,x)+O(\varepsilon^2), \label{Barrier_drift_approx}
\end{eqnarray}
where
\begin{eqnarray}\label{1st_order}
v^0_1(0,x) = e^{-c T}\int_{0}^{T} \int_{l}^{\infty}\frac{1}{\sqrt{2 \pi \sigma^2 s}}
(1-e^{-\frac{2(x-l)(y-l)}{\sigma^2s}})e^{-\frac{(y-x-(\alpha-\frac{1}{2}\sigma^2) s )^2}{2\sigma^2 s}} \vartheta(s,y)dyds, 
\end{eqnarray}
\begin{eqnarray}
&&\vartheta(t,x)\nn\\
&=&e^{\alpha(T-t)}\rho \nu\sigma e^x n(d_{1}(T-t,x,\alpha))(-d_{2}(T-t,x,\alpha))\nn\\
&+&2 e^{\alpha(T-t)}\rho \nu   {\alpha}
\left( \frac{e^x}{L} \right)^{-\frac{2\alpha}{\sigma^2}} Ln(c_{1}(T-t,x,\alpha))\sqrt{T-t}\nn\\
&-&e^{\alpha(T-t)}\rho \nu\sigma \left( \frac{e^x}{L} \right)^{-\frac{2\alpha}{\sigma^2}} L n(c_{1}(T-t,x,\alpha))c_1(T-t,x,\alpha)\nn\\
&-&e^{c(T-t)}\rho \nu  \frac{4\alpha}{\sigma} \left( \frac{e^x}{L} \right)^{1-\frac{2\alpha}{\sigma^2}}\nn\\
&\times& \Biggl\{ C^{BS}\l(T-t,\frac{L^2}{e^x},\alpha,\sigma \r)\l\{ 1+(x-\log{L}) 
\l(1-\frac{2\alpha}{\sigma^2} \r)  \r\}\nn\\
&& -(x-\log L) e^{-q (T-t)} \frac{L^2}{e^x} N(c_1(T-t,x,\alpha)) \Biggr\}\nn\\
&+&\lambda (\theta-\sigma) e^{\alpha(T-t)}e^x  n(d_{1}(T-t,x,\alpha))\sqrt{T-t} \nn\\
&-&\lambda (\theta-\sigma) \left( \frac{e^x}{L} \right)^{-\frac{2 \alpha}{\sigma^2}} e^{\alpha (T-t)}L n(c_{1}(T-t,x,\alpha))\sqrt{T-t}\nn\\
&-&e^{c(T-t)}\lambda (\theta-\sigma) \frac{4\alpha}{\sigma^3}\l(\log\frac{e^x}{L}\r)\left( \frac{e^x}{L} \right)^{1-\frac{2\alpha}{\sigma^2}}C^{BS}\l(T-t,\frac{L^2}{e^x},\alpha,\sigma \r), \label{Greeks}
\end{eqnarray}
and
\begin{eqnarray*}
c_1(t,x, \alpha)&=& {2l-x-\log K +\alpha t \over \sigma\sqrt{t}} +\frac{1}{2}\sigma \sqrt{t}.
\end{eqnarray*}
\end{thm}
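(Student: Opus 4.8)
The plan is to derive the expansion (\ref{Barrier_drift_approx}) by specializing the general first-order asymptotic expansion to the model (\ref{SVdrift}), and then to evaluate the leading correction $v^0_1(0,x)$ in closed form. First I would invoke Theorem \ref{th_general2} with $n=2$, $d=2$, $d'=1$, which yields $u^\varepsilon(0,x)=u^0(0,x)+\varepsilon v^0_1(0,x)+O(\varepsilon^2)$ uniformly in $x$; since $C_\mathrm{Barrier}^{SV,\varepsilon}(T,e^x)=u^\varepsilon(0,x)$ and the identification $u^0(0,x)=P^D_T\bar f(x)=C^{BS}_\mathrm{Barrier}(T,e^x,\alpha,\sigma,L)$ of (\ref{BS_barrier_drift}) has already been recorded, the whole content of the theorem is the explicit computation of $v^0_1(0,x)$. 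Before that, I would check that (\ref{SVdrift}) meets the (weaker) hypotheses of Theorem \ref{th_general2}, in particular that $g^0_1=\tilde{\mathscr{L}}^0_1u^0$ lies in the appropriate $\mathcal{H}$-class so that the Feynman--Kac representation applies.

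Next I would use the stochastic/semi-group representation of the first corrector: by (\ref{form_FK}) with $k=1$, $t=0$, or equivalently the semi-group form (\ref{form_semi_1}) at $t=T$,
\begin{eqnarray*}
v^0_1(0,x)=\int_0^T P^D_r\big[\tilde{\mathscr{L}}^0_1u^0(r,\cdot)\big](x)\,dr,
\end{eqnarray*}
where $u^0(r,\cdot)$ is the function $y\mapsto C^{BS}_\mathrm{Barrier}(T-r,e^y,\alpha,\sigma,L)$. The decisive simplification is that at $\varepsilon=0$ the volatility is frozen, $\sigma^0_t\equiv\sigma$, so the zeroth-order process is the one-dimensional Black--Scholes log-price with constant drift $\alpha-\tfrac12\sigma^2$ and the barrier acts only in the $x$-coordinate. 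Hence $P^D_r$ is exactly the killed semi-group with transition density (\ref{eq_semi1})--(\ref{density}) for $\mu=\alpha-\tfrac12\sigma^2$, and inserting this density (collecting the interest-rate discount into the prefactor $e^{-cT}$) turns the display into the iterated integral (\ref{1st_order}), whose integrand is $\vartheta(s,y)=\tilde{\mathscr{L}}^0_1u^0(s,y)$ up to the discount rescaling.

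The main work, and the principal obstacle, is the explicit evaluation of $\vartheta=\tilde{\mathscr{L}}^0_1u^0$, i.e. applying $\tilde{\mathscr{L}}^0_1=\rho\nu\sigma^2\,\partial_x\partial_\sigma+\lambda(\theta-\sigma)\,\partial_\sigma$ of (\ref{Gene_1}) to the barrier price (\ref{BS_barrier_drift}). I would split $C^{BS}_\mathrm{Barrier}$ into its vanilla part $C^{BS}(T-t,e^x,\alpha,\sigma)$ and the reflected term $(e^x/L)^{1-2\alpha/\sigma^2}C^{BS}(T-t,L^2/e^x,\alpha,\sigma)$ and differentiate each. The vanilla contributions reduce to standard Greeks: $\partial_\sigma C^{BS}=e^{-q(T-t)}e^x n(d_1)\sqrt{T-t}$, and the mixed derivative follows from $\partial_x d_1=1/(\sigma\sqrt{T-t})$ together with $n'(d)=-d\,n(d)$, producing the $n(d_1)(-d_2)$ term. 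The delicate piece is the $\sigma$-derivative of the reflected term, since $\sigma$ appears simultaneously inside the power $(e^x/L)^{1-2\alpha/\sigma^2}$, whose logarithmic derivative supplies the $\tfrac{4\alpha}{\sigma^3}\log(e^x/L)$ and $\tfrac{4\alpha}{\sigma}$ factors, and inside $C^{BS}(T-t,L^2/e^x,\cdot)$ through $c_1$; tracking all of these terms is precisely what yields the multi-line expression (\ref{Greeks}).

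Finally I would assemble the pieces: substituting the computed $\vartheta$ into the representation of $v^0_1(0,x)$ and expanding $P^D_r$ via (\ref{density}) reproduces (\ref{1st_order}), while the remainder $O(\varepsilon^2)$ is inherited directly from Theorem \ref{th_general2}. The analytic points requiring care beyond the algebra are the justification of Fubini and of differentiation under the expectation (guaranteed by the polynomial and $\mathcal{H}$-class bounds once $\vartheta$ is shown to have the requisite growth and local H\"older regularity away from $t=T$), and the integrability of the density against $\vartheta$ near $s=0$ and $s=T$, where the $1/\sqrt{s}$ and $\sqrt{T-s}$ factors must be controlled.
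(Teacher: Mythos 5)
Your proposal is correct and follows essentially the same route as the paper's proof: invoke Theorem \ref{th_general2} together with the semi-group representation (\ref{form_semi_1}), factor the discount $e^{-cT}$ out so that $P^D$ reduces to the explicit killed Gaussian kernel (\ref{eq_semi1})--(\ref{density}) with drift $\alpha-\tfrac12\sigma^2$ (volatility frozen at $\varepsilon=0$), and obtain $\vartheta$ by applying $\tilde{\mathscr{L}}^0_1=\rho\nu\sigma^2\partial_x\partial_\sigma+\lambda(\theta-\sigma)\partial_\sigma$ to the Black--Scholes barrier formula (\ref{BS_barrier_drift}), the $\sigma$-dependence of the reflection exponent $(e^x/L)^{1-2\alpha/\sigma^2}$ supplying the $\tfrac{4\alpha}{\sigma}$ and $\tfrac{4\alpha}{\sigma^3}\log(e^x/L)$ terms exactly as in (\ref{Greeks}). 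Your added remarks on Fubini, differentiation under the expectation, and integrability near $s=0,T$ are sound care points that the paper leaves implicit in its ``straightforward calculation.''
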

\begin{proof}
Firstly, note that when $k=1$ in Theorem \ref{th_form_semigroup}, we have
\begin{eqnarray*}
v^0_1(T - t, x) &=&
 \int ^t_0P^D_{t - r}\tilde{\mathscr {L}}^0_1P^D_rf(x)dr. 
\end{eqnarray*}
Thus, 
we see the expansion 
\begin{eqnarray}
C_\mathrm {Barrier}^{SV,\varepsilon}(T-t,e^x) = C^{BS}_\mathrm{Barrier}(T-t ,e^x,\alpha,\sigma,L)
+\varepsilon \int_{0}^{T-t}P^{D}_{s}\tilde{\mathscr {L}}_1^0P^{D}_{T-t-s}{\bar f}(x)ds+O(\varepsilon^2).
\end{eqnarray}
The first-order approximation term $v^0_1(t,x)=\int_{0}^{T-t}P^{D}_{s}\tilde{\mathscr {L}}_1^0P^{D}_{T-t-s}{\bar f}(x)ds$ is given by 
\begin{eqnarray*}
v^0_1(t,x)
&=&\int_{0}^{T-t} e^{-cs}{\bar P}^{D}_{s}\tilde{\mathscr {L}}_1^0 e^{-c(T-t-s)}{\bar P}^{D}_{T-t-s}{\bar f}(x)ds\\
&=&e^{-c(T-t)}\int_{0}^{T-t}{\bar P}^{D}_{s} \tilde{\mathscr {L}}_1^0 {\bar P}^{D}_{T-t-s} {\bar f}(x)ds,
\end{eqnarray*}
where ${\bar P}_t^{D}$ is defined by 
\begin{eqnarray*}
{\bar P}_t^{D}\bar{f}(x)=\int_{l}^{\infty}\frac{1}{\sqrt{2 \pi \sigma^2 s}}
(1-e^{-\frac{2(x-l)(y-l)}{\sigma^2s}})e^{-\frac{(y-x-(\alpha-\frac{1}{2}\sigma^2) s )^2}{2\sigma^2 s}} \bar{f}(y)dy.
\end{eqnarray*}
Define $\vartheta(t,x)$ as 
\begin{eqnarray*}
\vartheta(t,x)&=&\tilde{\mathscr {L}}_1^0 {\bar P}_{T-t}^{D}f(e^x)\\
&=&e^{c(T-t)} \rho \nu\sigma^2 \frac{\partial^2}{\partial x \partial \sigma} C^{BS}_\mathrm{Barrier}(T-t ,e^x,\alpha,\sigma,L)+e^{c(T-t)}\lambda (\theta-\sigma) \frac{\partial}{\partial \sigma} C^{BS}_\mathrm{Barrier}(T-t ,e^x,\alpha,\sigma,L).
\end{eqnarray*}
A straightforward calculation shows that the above function 
agrees with the right-hand side of (\ref{Greeks}).
Then we get the assertion. 
\end{proof} 

Remark that through numerical integrations with respect to time $s$ and space $y$ in 
(\ref{1st_order}),
we easily obtain the first order approximation of the down-and-out option prices.\\

Next, as a special case of (\ref{SVdrift}) we consider the following stochastic volatility model with no drifts:
\begin{eqnarray}\label{no_drift}
dS^{\varepsilon }_t &=&\sigma ^\varepsilon _t S_{t}^{\varepsilon}dB^1_t, \ \  S^{\varepsilon }_0 = S>0, \\
d\sigma ^\varepsilon _t &=& \varepsilon \nu \sigma ^\varepsilon _t(\rho dB^1_t + \sqrt{1 - \rho ^2}dB^2_t), \ \ \sigma ^\varepsilon _0 = \sigma > 0 .\nn
\end{eqnarray}
where $\varepsilon \in [0,1)$, $\rho \in [-1,1]$ and $B=(B^1,B^2)$ is a two dimensional Brownian motion. 
In this case, we can provide a simpler approximation formula
than in Theorem \ref{DOBarrier2}.  

By It\^o's formula, the following logarithmic model is obtained. 
\begin{eqnarray}\label{SDE_logSABR}
\begin{split}
dX^{\varepsilon }_t &= -\frac{1}{2}(\sigma ^\varepsilon _t)^2dt+  \sigma ^\varepsilon _tdB^1_t, \ \ X^{\varepsilon }_0 = x=\log S, \\
d\sigma ^\varepsilon _t &= \varepsilon \nu \sigma ^\varepsilon _t(\rho dB^1_t + \sqrt{1 - \rho ^2}dB^2_t), \ \ \sigma ^\varepsilon _0 = \sigma.
\end{split}
\end{eqnarray}
Again, the barrier option price is given by
\begin{eqnarray*}
C_\mathrm {Barrier}^{SV,\varepsilon}(T,e^x)
= \E \left[ f(S^\varepsilon _T ) 1_{\{\min _{0\leq u\leq T}S^\varepsilon _u > L\}}\right],
\end{eqnarray*}
where $f$ stands for a 
payoff function  and $L(< S)$ is a barrier price.

The differentiation operators $\mathscr {L}^{\varepsilon}$, $\tilde{\mathscr {L}}_1^0$ and 
the PDE are same as 
(\ref{Gene_ep})--(\ref{SV_PDE_drift})
with $c=q=0$ and $\lambda=0$. 
Also, the barrier option price in the Black-Scholes model 
coincides with 
(\ref{BS_barrier_drift}) with no drift, that is,  
\begin{eqnarray*}
C_\mathrm {Barrier}^{BS}(T,S)=C^{BS}(T,S)-\l(\frac{S}{L} \r)C^{BS}\l(T,\frac{L^2}{S} \r),
\end{eqnarray*}
where $C^{BS}(T,S)$ is the driftless Black-Scholes formula of the European call option given by
\begin{eqnarray*}
C^{BS}(T,S)=S N(d_1(T,\log S))-K N(d_2(T,\log S))
\end{eqnarray*}
with
\begin{eqnarray*}
d_1(t,x)&=&d_1(t,x,0)= \frac{ x-\log K+\sigma^2t/2}{ \sigma \sqrt{t}},\\
d_2(t,x)&=&d_2(t,x,0)=d_1(t,x)-\sigma \sqrt{t}.
\end{eqnarray*}
Then, we reach the following expansion formula which only needs 1-dimensional numerical integration.
\begin{thm} \ \label{DOBarrier}
$C_\mathrm {Barrier}^{SV,\varepsilon}(T,e^x)
=C_\mathrm {Barrier}^{BS}(T,e^x)+\varepsilon v_1^0(0,x)+O(\varepsilon^2)$, 
where
\begin{eqnarray}
v_1^0(0,x) &=& -{1\over 2}T \nu \rho  \sigma \l\{e^x n(d_1(T,x))d_2(T,x)+Ln(c_1(T,x))c_1(T,x)\r\}\nn\\
&&+\frac{\nu \rho  L(x-l)\log (L/K)}{2\pi \sigma }
\int_0^{T} \frac{(T-s)^{1/2}}{s^{3/2}} \exp 
\left (-\frac{c_2(T-s, L/K) + c_2(s, L/e^x)}{2}\right )ds, \nn\\
c_1(t,x)&=& {\log (L^2/e^x K)+\sigma^2 t/2\over \sigma\sqrt{t}}, \ \ 
c_2(t, y) \ = \ \left( \frac{\log y + \sigma ^2t/2}{\sigma \sqrt{t}}\right) ^2. 
\end{eqnarray}
\end{thm}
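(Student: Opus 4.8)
The plan is to obtain Theorem \ref{DOBarrier} as the $c=q=\lambda=0$ specialization of Theorem \ref{DOBarrier2}, the only genuinely new work being the explicit evaluation of the spatial integral that was left as a double integral in (\ref{1st_order}). First I would invoke Theorem \ref{th_general2}, which covers the log-normal SABR dynamics (\ref{SDE_logSABR}) even though [A]--[I] fail, to secure the first-order expansion $C_\mathrm{Barrier}^{SV,\varepsilon}(T,e^x)=C_\mathrm{Barrier}^{BS}(T,e^x)+\varepsilon v^0_1(0,x)+O(\varepsilon^2)$ with the coefficient represented, via (\ref{form_semi_1}), as $v^0_1(0,x)=\int_0^T P^D_s\tilde{\mathscr{L}}^0_1 P^D_{T-s}\bar{f}(x)\,ds$. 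Since the generator (\ref{Gene_ep}), the operator $\tilde{\mathscr{L}}^0_1$ and the PDE coincide with those of (\ref{SVdrift}) once $c=q=0$ (so $\alpha=0$) and $\lambda=0$, the whole reduction in the proof of Theorem \ref{DOBarrier2} transfers verbatim; in particular $P^D_{T-s}\bar{f}(x)=C_\mathrm{Barrier}^{BS}(T-s,e^x,0,\sigma,L)$ and the zeroth-order term is the driftless down-and-out price.

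Next I would simplify the integrand. Differentiating the generator (\ref{Gene_ep}) once in $\varepsilon$ at $\varepsilon=0$ and setting $\lambda=0$ leaves only the cross term $\rho\nu\sigma^2\,\partial^2/\partial x\,\partial\sigma$, and applying it to the image representation (\ref{BS_barrier_drift}) with $\alpha=0$, namely $C^{BS}_\mathrm{Barrier}=C^{BS}(T-s,e^x)-(e^x/L)\,C^{BS}(T-s,L^2/e^x)$, produces the mixed vega-type Greek $\vartheta(s,\cdot)=\tilde{\mathscr{L}}^0_1 P^D_{T-s}\bar f$. This $\vartheta$ is exactly the $\alpha=0,\lambda=0$ specialization of (\ref{Greeks}): every summand carrying an explicit factor $\alpha$ or $\lambda$ vanishes and every factor $(e^x/L)^{\mp 2\alpha/\sigma^2}$ becomes $1$, so $\vartheta$ collapses to the two Gaussian terms $-\rho\nu\sigma\{e^y n(d_1(T-s,y))d_2(T-s,y)+L\,n(c_1(T-s,y))c_1(T-s,y)\}$.

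The decisive step is then to carry out, for each fixed $s$, the $y$-integration in $v^0_1(0,x)=\int_0^T\int_l^\infty p(s,x,y)\vartheta(s,y)\,dy\,ds$ against the reflected kernel (\ref{density}) with $\mu=-\sigma^2/2$. I would split $p$ into its free Gaussian part and its image part $-\exp(-2(x-l)(y-l)/\sigma^2 s)\times(\cdot)$. Each summand of $\vartheta$ is itself Gaussian in $y$ at scale $\sigma^2(T-s)$, so integrating against the free part is a Gaussian convolution: completing the square merges the scales $s$ and $T-s$ into the total scale $\sigma^2 T$ by Chapman--Kolmogorov, the $s$-dependence drops out, and the trivial $\int_0^T ds$ supplies the factor $T$, giving the closed term $-\tfrac12 T\nu\rho\sigma\{e^x n(d_1(T,x))d_2(T,x)+L\,n(c_1(T,x))c_1(T,x)\}$. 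Integrating $\vartheta$ against the image part does not collapse: completing the square there retains a genuine $s$-dependence and, after extracting the boundary factor $(x-l)$ and the $\log(L/K)$ weight from differentiating the two image Gaussians, leaves precisely the one-dimensional integral $\frac{\nu\rho L(x-l)\log(L/K)}{2\pi\sigma}\int_0^T (T-s)^{1/2}s^{-3/2}\exp(-\tfrac12(c_2(T-s,L/K)+c_2(s,L/e^x)))\,ds$. Summing the two contributions yields (\ref{1st_order}).

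The main obstacle is the bookkeeping in this last step: the barrier price $P^D_{T-s}\bar f$ already carries one reflection (its image term) and the kernel $p$ carries a second, so the double integral decomposes into several Gaussian-product pieces, and one must complete the square correctly across the three exponents --- the kernel at scale $s$, the Greek at scale $T-s$, and the two image factors --- to see which combinations are $s$-independent (hence give the clean $\tfrac12 T$ term) and which survive as the residual time integral. Matching the surviving exponents to the compact notation $c_2(t,y)=((\log y+\sigma^2 t/2)/\sigma\sqrt t)^2$, and tracking the signs and the factors of $\tfrac12$ that emerge from the vega--gamma structure of the Greeks, is where the care is needed; everything else is a routine specialization of Theorem \ref{DOBarrier2}.
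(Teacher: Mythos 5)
Your framing steps are exactly the paper's: invoke Theorem \ref{th_general2} with (\ref{form_semi_1}) for the expansion, and note that $\vartheta$ specializes (with $\alpha=\lambda=0$) to $-\nu\rho\sigma\{e^x n(d_1)d_2+Ln(c_1)c_1\}$, consistent with (\ref{f_10000}). The gap is in your ``decisive step.'' First, Chapman--Kolmogorov is not available: the $y$-integration runs over $(l,\infty)$ only, and the mixed Greek $\nu\rho\sigma^2\frac{\partial^2}{\partial x\partial\sigma}P^D_{T-s}\bar f$ neither vanishes at the barrier nor is odd under the weighted reflection $\psi(y)\mapsto e^{y-l}\psi(2l-y)$ (differentiating in $x$ destroys the weighted antisymmetry that the barrier price itself enjoys), so convolving it with the free part of the kernel over the half-line does \emph{not} collapse to the scale-$T$ Greek. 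Second, even granting a full-line convolution, the $s$-dependence does not ``drop out'': since $\partial_\sigma P^{free}_s=\sigma s(\partial_x^2-\partial_x)P^{free}_s$, the vega--gamma relation leaves an explicit factor $(T-s)$ in the integrand, and it is $\int_0^T(T-s)\,ds=T^2/2$ that produces the $\tfrac{T}{2}$ in the closed term --- your statement that ``the trivial $\int_0^T ds$ supplies the factor $T$'' is inconsistent with the answer you are trying to reach. Third, and most importantly, the asserted correspondence (free kernel $\leftrightarrow$ closed term; image kernel $\leftrightarrow$ residual integral) is false: each of the four Gaussian-product pieces leaves incomplete-Gaussian $N(\cdot)$ terms depending jointly on $s$ and $T-s$, and arranging their cancellation is precisely the boundary analysis your proposal never performs. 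A brute-force Gaussian computation could in principle be completed, but not along the decomposition you describe.

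The paper's actual proof of the splitting is Proposition \ref{Barrier_Semi-group}, proved in Section \ref{sec_proof_eg_prop1} by a PDE argument rather than by integration: the commutation identity $\tilde{\mathscr{L}}^0_1P^D_t\bar f=\nu\rho\sigma^3 t\left(\frac{\partial^3}{\partial x^3}-\frac{\partial^2}{\partial x^2}\right)P^D_t\bar f$ shows that $\frac{T-t}{2}\tilde{\mathscr{L}}^0_1P^D_{T-t}\bar f$ solves the inhomogeneous equation in the interior but violates the Dirichlet condition at $x=l$; the difference $\eta$ in (\ref{def_eta}) then solves the homogeneous problem with boundary data $-\frac{T-t}{2}\tilde{\mathscr{L}}^0_1P^D_{T-t}\bar f(l)$, and Feynman--Kac (Theorem 6.5.2 of \cite{Friedman2}) converts this into the hitting-time expectation. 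In the final formula the factor $(x-l)$ comes from the hitting-time density $h(s,x-l)$, and $\log(L/K)$ from evaluating the Greek at the barrier using $d_1(t,l)=c_1(t,l)$ --- not, as you suggest, from ``differentiating the two image Gaussians.'' Identifying this boundary term is the real content of the proof, and it is the idea missing from your proposal.
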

\begin{proof}
See Appendix \ref{sec_proof_semigroup_th2}.
\end{proof}
\subsection{Numerical Example}
Finally, applying the our approximation formulas in 
Theorem \ref{DOBarrier2} and Theorem \ref{DOBarrier},
we present numerical experiments for European down-and-out barrier call prices.
First, let us denote $u^0=C_\mathrm {Barrier}^{BS}(T,S)$ and $v^0_1=v^0_1(0,\log S)$. 
Then, we see 
\begin{eqnarray*}
C_\mathrm {Barrier}^{SV,\varepsilon}(T,S) \simeq u^0+\varepsilon v^0_1.
\end{eqnarray*} 
In the following we report the results of the numerical experiments, where
the numbers in the parentheses show the error rates (\%)
relative to the benchmark prices of $C_\mathrm {Barrier}^{SV,\varepsilon}(T,S)$;
they are computed by Monte-Carlo simulations with 100,000 time steps and 1,000,000 trials.
We check the accuracy of our approximations
by changing the model parameters. Case 1--6 show the results
for the stochastic volatility model with drifts of the underlying price process or/and
the volatility process (\ref{SVdrift}), 
while Case 7 shows the result for the stochastic volatility model with no drifts  (\ref{no_drift}). 
There, we apply the formula in Theorem \ref{DOBarrier2} to Case 1--6 and 
the formula in Theorem \ref{DOBarrier} to Case 7, respectively.

Apparently, 
\vspace{1mm}
our approximation formula $u^0+\varepsilon v^0_1$ improves the accuracy against $C_\mathrm {Barrier}^{SV,\varepsilon}(T,S)$, 
and it is observed that $\varepsilon v^0_1$ accurately compensates for the difference between 
$C_\mathrm {Barrier}^{SV,\varepsilon}(T,S)$ 
and $C_\mathrm {Barrier}^{BS}(T, S)$, which confirms the validity of our method.\\

\begin{enumerate}
\item
\beas
&&
S=100,\ \sigma=0.15,\ c=0.01,\ q=0.0,\varepsilon \nu=0.2,\ \rho=-0.5, \\
&&
\varepsilon \lambda=0.00,\ \theta=0.00,\
L=95,\ T=0.5,\  K=100,\ 102,\ 105.
\eeas
 \begin{table}[ht]
\begin{center}
\caption{Down-and-Out Barrier Option
}
\label{fig1}
\begin{tabular}{|c|c|c|c| } \hline
{Strike} & \multicolumn{1}{|c|}{Benchmark} & \multicolumn{1}{|c}{Our Approximation ($u^0+\varepsilon v^0_1$)}  
& \multicolumn{1}{|c|}{Barrier Black-Scholes ($u^0$)} 
  \\\hline\hline
100  & 3.468 & 3.466 (-0.05\%) & 3.495 (0.80\%) \\
102  & 2.822 & 2.822 (0.00\%) & 2.866 (1.57\%) \\
105 & 1.986 & 1.986 (0.01\%) & 2.052 (3.36\%) \\
\hline
\end{tabular}
\end{center}
\end{table}
\
\\
\item
\beas
&&
S=100,\ \sigma=0.15,\  c=0.01, \ q=0.0, \varepsilon \nu=0.35,\ \rho=-0.7, \\
&&
\varepsilon \lambda=0.00,\ \theta=0.00,\
L=95,\ T=0.5,\  K=100,\ 102,\ 105.
\eeas
 \begin{table}[ht]
\begin{center}
\caption{Down-and-Out Barrier Option}
\label{fig2}
\begin{tabular}{|c|c|c|c| } \hline
{Strike} & \multicolumn{1}{|c|}{Benchmark} & \multicolumn{1}{|c}{Our Approximation ($u^0+\varepsilon v^0_1$)}  
& \multicolumn{1}{|c|}{Barrier Black-Scholes ($u^0$)} 
  \\\hline\hline
100  & 3.421 & 3.423 (0.07\%) & 3.495 (2.18\%) \\
102  & 2.753 & 2.757 (0.18\%) & 2.866 (4.13\%) \\
105 & 1.885 & 1.890 (0.23\%) & 2.052 (8.88\%) \\
\hline
\end{tabular}
\end{center}
\end{table}
\item
\beas
&&
S=100,\ \sigma=0.15,\  c=0.05, \ q=0.0, \varepsilon \nu=0.35,\ \rho=-0.7,\\
&&
\varepsilon \lambda=0.00,\ \theta=0.00,\
L=95,\ T=0.5,\  K=100,\ 102,\ 105.
\eeas
 \begin{table}[ht]
\begin{center}
\caption{Down-and-Out Barrier Option}
\label{fig3}
\begin{tabular}{|c|c|c|c| } \hline
{Strike} & \multicolumn{1}{|c|}{Benchmark} & \multicolumn{1}{|c}{Our Approximation ($u^0+\varepsilon v^0_1$)}  
& \multicolumn{1}{|c|}{Barrier Black-Scholes ($u^0$)} 
  \\\hline\hline
100  & 4.352 & 4.349 (-0.07\%) & 4.399 (1.06\%) \\
102  & 3.585 & 3.586 (0.02\%) & 3.665 (2.24\%) \\
105 & 2.560 & 2.563 (0.11\%) & 2.696 (5.31\%) \\
\hline
\end{tabular}
\end{center}
\end{table}
\ \\
\item
\beas
&&
S=100,\ \sigma=0.15,\  c=0.05, \ q=0.1,\varepsilon \nu=0.2,\ \rho=-0.5,\\
&&
\varepsilon \lambda=0.00,\ \theta=0.00,\
L=95,\ T=0.5,\  K=100,\ 102,\ 105.
\eeas
\begin{table}[ht]
\begin{center}
\caption{Down-and-Out Barrier Option}
\label{fig4}
\begin{tabular}{|c|c|c|c| } \hline
{Strike} & \multicolumn{1}{|c|}{Benchmark} & \multicolumn{1}{|c}{Our Approximation ($u^0+\varepsilon v^0_1$)}  
& \multicolumn{1}{|c|}{Barrier Black-Scholes ($u^0$)} 
  \\\hline\hline
100  & 2.231 & 2.224 (-0.31\%) & 2.268 (1.64\%) \\
102  & 1.758 & 1.754 (-0.27\%) & 1.812 (3.02\%) \\
105 & 1.172 & 1.168 (-0.31\%) & 1.243 (6.05\%) \\
\hline
\end{tabular}
\end{center}
\end{table}
\ \\
\item
\beas
&&
S=100,\ \sigma=0.15,\  c=0.01, \ q=0.0, \varepsilon \nu=0.2,\ \rho=-0.5,\\
&&
\varepsilon \lambda=0.2,\ \theta=0.25,\
L=95,\ T=0.5,\  K=100,\ 102,\ 105.
\eeas
\begin{table}[ht]
\begin{center}
\caption{Down-and-Out Barrier Option}
\label{fig5}
\begin{tabular}{|c|c|c|c| } \hline
{Strike} & \multicolumn{1}{|c|}{Benchmark} & \multicolumn{1}{|c}{Our Approximation ($u^0+\varepsilon v^0_1$)}  
& \multicolumn{1}{|c|}{Barrier Black-Scholes ($u^0$)} 
  \\\hline\hline
100  & 3.523 & 3.517 (-0.16\%) & 3.495 (-0.77\%) \\
102  & 2.891 & 2.888 (-0.09\%) & 2.866 (-0.85\%) \\
105 & 2.066 & 2.065 (-0.06\%) & 2.052 (-0.64\%) \\
\hline
\end{tabular}
\end{center}
\end{table}

\item
\beas
&&
S=100,\ \sigma=0.15,\  c=0.01, \ q=0.0,\varepsilon \nu=0.2,\ \rho=-0.5,\\
&&
\varepsilon \lambda=0.5,\ \theta=0.25,\
L=95,\ T=0.5,\  K=100,\ 102,\ 105.
\eeas
\begin{table}[ht]
\begin{center}
\caption{Down-and-Out Barrier Option}
\label{fig6}
\begin{tabular}{|c|c|c|c| } \hline
{Strike} & \multicolumn{1}{|c|}{Benchmark} & \multicolumn{1}{|c}{Our Approximation ($u^0+\varepsilon v^0_1$)}  
& \multicolumn{1}{|c|}{Barrier Black-Scholes ($u^0$)} 
  \\\hline\hline
100  & 3.587 & 3.594 (0.20\%) & 3.495 (-2.55\%) \\
102  & 2.976 & 2.987 (0.39\%) & 2.866 (-3.68\%) \\
105 & 2.170 & 2.183 (0.59\%) & 2.052 (-5.41\%) \\
\hline
\end{tabular}
\end{center}
\end{table}
\item
\beas
&&
S=100,\ \sigma=0.15,\  c=0.0, \ q=0.0, \varepsilon \nu=0.2,\ \rho=-0.5,\\
&&
\varepsilon \lambda=0.0,\ \theta=0.0,\
L=95,\ T=0.5,\  K=100,\ 102,\ 105.
\eeas
 \begin{table}[ht]
\begin{center}
\caption{Down-and-Out Barrier Option}
\label{fig7}
\begin{tabular}{|c|c|c|c| } \hline
{Strike} & \multicolumn{1}{|c|}{Benchmark} & \multicolumn{1}{|c}{Our Approximation ($u^0+\varepsilon v^0_1$)}  
& \multicolumn{1}{|c|}{Barrier Black-Scholes ($u^0$)} 
  \\\hline\hline
100  & 3.261 & 3.258 (-0.09\%) & 3.290 (0.90\%) \\
102  & 2.640 & 2.639 (-0.02\%) & 2.686 (1.78\%) \\
105 & 1.841 & 1.841 (0.01\%) & 1.911 (3.77\%) \\
\hline
\end{tabular}
\end{center}
\end{table}
\end{enumerate}

\section{Conclusion}
This paper has proposed an approximation scheme for barrier option prices
by applying a new
semi-group expansion to the Cauchy-Dirichlet problem 
in the second order parabolic partial differential equations (PDEs). 
As an application, we have derived
a semi-group expansion formula 
under a certain type of stochastic volatility model 
and confirmed the validity of our method through numerical examples. 
Developing concrete computational schemes
under various models is our next research topic. 
\newpage

\section{Appendix A: Proof of Theorem 2,3,4,6}\label{sec_appendix}
\subsection{Proof of Theorem \ref{th_classical_vk}}\label{sec_proof_th2}
First, by the definition of $v^0_k$, we easily get 
$v^0_k(T, x) = 0$ for $x\in D$ and 
$v^0_k(t, x) = 0$ for $(t, x)\in [0, T]\times \partial D$. 

Next, fix any $x\in D$. 
By the Markov property, we have 
\begin{eqnarray*}
&&J(t\wedge \tau _D(X^{0, x}))v^0_k\left (t\wedge \tau _D(X^{0, x}), X^{0, x}_{t\wedge \tau _D(X^{0, x})}\right ) \ =\  
J(t)v^0_k(t, X^{0, x}_t)1_{\{\tau _D(X^{0, x})\geq t\}}\\
&=& 
\E \left[ \int ^{T\wedge \tau _D(X^{0, x})}_t
J(r)g^0_k(r, X^{0, x}_r)dr \ \Big| \ \mathcal {F}_t\right] 1_{\{\tau _D(X^{0, x})\geq t\}}\\
&=&
\E \left[ \int ^{T\wedge \tau _D(X^{0, x})}_{0}
J(r)g^0_k(r, X^{0, x}_r)dr \ \Big| \ \mathcal {F}_t\right] - 
\int ^{t\wedge \tau _D(X^{0, x})}_{0}J(r)g^0_k(r, X^{0, x}_r)dr 
\end{eqnarray*}
for each $t\in [0, T]$, where $J(r) = \exp \left ( -\int ^r_0c(X^{0, x}_v, 0)dv\right )$ and 
$(\mathcal {F}_r)_r$ is the Brownian filtration. 
This implies that 
\begin{eqnarray*}
M_t := J(t\wedge \tau _D(X^{0, x}))v^0_k\left (t\wedge \tau _D(X^{0, x}), X^{0, x}_{t\wedge \tau _D(X^{0, x})}\right ) + 
\int ^{t\wedge \tau _D(X^{0, x})}_{0}J(r)g^0_k(r, X^{0, x}_r)dr 
\end{eqnarray*}
is a local martingale. On the other hand, applying Ito's formula, we have that 
\begin{eqnarray*}
M_t &=& M_0 + \int ^t_0\left\{ \left( \frac{\partial }{\partial t} + \mathscr {L}^0\right )
v^0_k(r, X^{0, x}_r) + 
g^0_k(r, X^{0, x}_r)\right\}1_{\{\tau _D(X^{0, x})\geq r\}}dr\\
&& + 
\sum ^d_{i, j = 1}\int ^t_0J(r)
\sigma ^{ij}(X^{0, x}_r, 0)
\frac{\partial }{\partial x^i}v^0_k(r, X^{0, x}_r)1_{\{\tau _D(X^{0, x})\geq r\}}dB^j_r
\end{eqnarray*}
for each $t\in [0, T]$. Thus, the uniqueness of decompositions of semimartingales gives us 
\begin{eqnarray*}
\int ^t_0\left\{ \left( \frac{\partial }{\partial t} + \mathscr {L}^0\right )
v^0_k(r, X^{0, x}_r) + 
g^0_k(r, X^{0, x}_r)\right\}1_{\{\tau _D(X^{0, x})\geq r\}}dr = 0, \ \ t\in [0, T]. 
\end{eqnarray*}
Therefore, for each fixed $t\in (0, T)$, 
\begin{eqnarray*}
\frac{1}{h}\int ^{t+h}_t\left\{ \left( \frac{\partial }{\partial t} + \mathscr {L}^0\right )
v^0_k(r, X^{0, x}_r) + 
g^0_k(r, X^{0, x}_r)\right\}1_{\{\tau _D(X^{0, x})\geq r\}}dr = 0 
\end{eqnarray*}
holds for any small enough $h > 0$. 
Since $x\in D$, by letting $h\rightarrow 0$, we obtain 
\begin{eqnarray*}
\left( \frac{\partial }{\partial t} + \mathscr {L}^0\right )
v^0_k(t, x) + 
g^0_k(t, x) = 0. 
\end{eqnarray*}
Finally we prove (\ref {est_v0k}) by mathematical induction. 
When $k = 0$, the assertion is easily obtained by (\ref {SDE_poly_growth}), (\ref {g_expansion}), [F] and [G]. 
Now we assume that (\ref {est_v0k}) holds for $1, 2, \ldots , k - 1$. 
Then, by (\ref {g_expansion}), 
(\ref {def_g}) and [F], we have 
\begin{eqnarray*}
|g^0_k(t, x)| \leq C(1 + |x|^{2m})\sum _{|\alpha |\leq 2}\left( |D_\alpha u^0(t, x)| + 
\sum ^{k-1}_{l = 1}|D_\alpha v^0_l(t, x)|\right) 
\end{eqnarray*}
for some $C, m > 0$, where $\alpha = (i_1, \ldots , i_d)\in \{0, 1, 2, \ldots \}^d$ is a multi-index, 
$|\alpha | = i_1 + \cdots + i_d$ and 
$D_\alpha = \partial ^{|\alpha |}/{(\partial x^1)^{i_1}\cdots (\partial x^d)^{i_d}}$. 
By the induction hypothesis and [G]--[H], we see that 
\begin{eqnarray*}
\sum _{|\alpha |\leq 2}\left( |D_\alpha u^0(t, x)| + 
\sum ^{k-1}_{l = 1}|D_\alpha v^0_l(t, x)|\right) \leq C'M(t)(1 + |x|^{2m'})
\end{eqnarray*}
for some $C', m' > 0$ and $M\in L^1([0, T). dt)$. 
Therefore, we get 
\begin{eqnarray*}
|g^0_k(t, x)| \leq C''M(t)(1 + |x|^{2m''})
\end{eqnarray*}
for some $C'', m'' > 0$. 
Then we obtain 
\begin{eqnarray*}
|v^0_k(t, x)| 
&\leq & 
C''\E \left [\int ^{(T - t)}_0M(t + r)(1 + |X^{0, x}_r|^{2m''})dr\right ]\\
&\leq & 
C''\left( 1 + \E [\sup _{0\leq r\leq T}|X^{0, x}_r|^{2m''}]\right) \int ^T_0M(r)dr \\
&\leq & 
C''C_{m''}T^{m''-1}\left( \int ^T_0M(r)dr\right) (1 + |x|^{2m''}) 
\end{eqnarray*}
by virtue of (\ref {SDE_poly_growth}). 
Thus (\ref {est_v0k}) also holds for $k$. 
Now we complete the proof of Theorem \ref {th_classical_vk}.

\subsection{Proof of Theorem \ref {th_main}}\label{sec_proof}
First, we generalize the definitions of $\tilde{\mathscr {L}}^0_k$, $g^0_k$ and $v^0_k$. 
For $k,n \geq 1$, We define 
\begin{eqnarray*}
\tilde{\mathscr {L}}^\varepsilon _k &=& 
\frac{1}{(k-1)!}\Bigg\{ 
\frac{1}{2}\sum ^d_{i, j = 1}\int ^1_0(1-r)^{k-1}\frac{\partial ^ka^{ij}}{\partial \varepsilon ^k}(x, r\varepsilon )dr\frac{\partial ^2}{\partial x^i\partial x^j}\\&&\hspace{18mm} + 
\sum ^d_{i = 1}\int ^1_0(1-r)^{k-1}\frac{\partial ^kb^i}{\partial \varepsilon ^k}(x, r\varepsilon )dr\frac{\partial }{\partial x^i} - 
\int ^1_0(1-r)^{k-1}\frac{\partial ^kc}{\partial \varepsilon ^k}(x, r\varepsilon )dr
\Bigg\} , \\
g^\varepsilon _n(t, x) &=& \tilde{\mathscr {L}}^\varepsilon _nu^0(t, x) + 
\sum ^{n-1}_{k = 1}\tilde{\mathscr {L}}^{0}_{n-k}v^{0}_k(t, x) + 
\sum ^{n-2}_{k = 1}\varepsilon ^k
\left\{
\tilde{\mathscr {L}}^\varepsilon _nv^0_k(t, x)
+ \sum ^{n-1}_{l = k + 1}\tilde{\mathscr {L}}^0_{n + k - l}v^0_k(t, x)\right\}\\
&&
+\varepsilon ^{n-1}\tilde{\mathscr {L}}^\varepsilon _nv^0_{n-1}(t, x),
\end{eqnarray*}
where $g_1^\varepsilon(t,x)$ and $g_2^\varepsilon(t,x)$ 
are understood as $g_1^\varepsilon(t,x)=\tilde{\mathscr {L}}^\varepsilon _1 u^0(t, x)$
and
$g_2^\varepsilon(t,x)=\tilde{\mathscr {L}}^\varepsilon _2 u^0(t, x) + \tilde{\mathscr {L}}^0 _1 v_1^0(t, x)
+ \varepsilon\tilde{\mathscr {L}}^\varepsilon _2 v^0_1(t, x)$, respectively.

We consider the following Cauchy-Dirichlet problem: 
\begin{eqnarray}\label{eq_PDE2_eps}
\left\{
\begin{array}{ll}
 	-\frac{\partial }{\partial t}v(t, x) - \mathscr {L}^\varepsilon v(t, x) - g^\varepsilon _n(t, x) = 0,& (t, x)\in [0, T)\times D,	\\
 	v(T, x) = 0,& x\in D,	\\
 	v(t, x) = 0,& (t, x)\in [0, T]\times \partial D.
\end{array}
\right.
\end{eqnarray}

For $\varepsilon \neq 0$, 
we define $v^\varepsilon _n = [u^\varepsilon - \{ u^0 + \sum ^{n-1}_{k = 1}\varepsilon ^kv^0_k(t, x)\} ] / \varepsilon ^n$. 
Obviously, we see 
\begin{eqnarray}\label{temp_expansion2}
u^\varepsilon (t, x) = u^0(t, x) + \sum ^{n-1}_{k = 1}\varepsilon ^kv^0_k(t, x) + \varepsilon ^nv^\varepsilon _n(t, x). 
\end{eqnarray}

\begin{prop} \ \label{prop_v}
The function $v^\varepsilon _n$ is a solution of $(\ref {eq_PDE2_eps})$. 
\end{prop}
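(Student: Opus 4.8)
The plan is to insert the exact expansion (\ref{temp_expansion2}) into the parabolic operator and reduce the statement to a purely algebraic identity among the operators $\tilde{\mathscr{L}}^0_j$ and the remainder operator $\tilde{\mathscr{L}}^\varepsilon_n$. First I would dispose of the auxiliary conditions. Since $u^\varepsilon(T,\cdot)=u^0(T,\cdot)=f$ and $v^0_k(T,\cdot)=0$ on $D$, the definition of $v^\varepsilon_n$ gives $v^\varepsilon_n(T,\cdot)=0$ on $D$; likewise $u^\varepsilon$, $u^0$ and every $v^0_k$ vanish on $\partial D$, hence so does $v^\varepsilon_n$. All the substance therefore lies in the interior equation on $[0,T)\times D$, where by [I] each $v^0_k$ is of class $C^{1,2}$ and $u^\varepsilon$, $u^0$ are classical solutions, so every operator below may be applied pointwise.

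Next I would apply $\partial/\partial t+\mathscr{L}^\varepsilon$ to both sides of (\ref{temp_expansion2}). As $u^\varepsilon$ solves (\ref{eq_PDE}), the left-hand side vanishes and
\[
\varepsilon^n\Big(\frac{\partial}{\partial t}+\mathscr{L}^\varepsilon\Big)v^\varepsilon_n
= -\Big(\frac{\partial}{\partial t}+\mathscr{L}^\varepsilon\Big)u^0
- \sum_{k=1}^{n-1}\varepsilon^k\Big(\frac{\partial}{\partial t}+\mathscr{L}^\varepsilon\Big)v^0_k .
\]
Using $(\partial/\partial t+\mathscr{L}^0)u^0=0$ and $(\partial/\partial t+\mathscr{L}^0)v^0_k=-g^0_k$ to eliminate the time derivatives, the two brackets become $(\mathscr{L}^\varepsilon-\mathscr{L}^0)u^0$ and $(\mathscr{L}^\varepsilon-\mathscr{L}^0)v^0_k-g^0_k$, respectively. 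Hence it suffices to establish the operator identity
\[
(\mathscr{L}^\varepsilon-\mathscr{L}^0)u^0
+\sum_{k=1}^{n-1}\varepsilon^k\big[(\mathscr{L}^\varepsilon-\mathscr{L}^0)v^0_k-g^0_k\big]
=\varepsilon^n g^\varepsilon_n .
\]

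The engine of the argument is Taylor's theorem with integral remainder for the coefficients $a^{ij}(x,\cdot)$, $b^i(x,\cdot)$, $c(x,\cdot)$, which are $n$-times continuously differentiable in $\varepsilon$ by [G]. Because $\tilde{\mathscr{L}}^\varepsilon_n$ is assembled precisely from the remainders $\frac{1}{(n-1)!}\int_0^1(1-r)^{n-1}\frac{\partial^n a^{ij}}{\partial\varepsilon^n}(x,r\varepsilon)\,dr$ (and the analogous ones for $b^i$, $c$), this produces the exact operator identity $\mathscr{L}^\varepsilon=\mathscr{L}^0+\sum_{j=1}^{n-1}\varepsilon^j\tilde{\mathscr{L}}^0_j+\varepsilon^n\tilde{\mathscr{L}}^\varepsilon_n$. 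I would substitute it into $(\mathscr{L}^\varepsilon-\mathscr{L}^0)u^0$ and into each $(\mathscr{L}^\varepsilon-\mathscr{L}^0)v^0_k$. The polynomial part $\sum_{j=1}^{n-1}\varepsilon^j\tilde{\mathscr{L}}^0_j u^0$ cancels against the $u^0$-term $\tilde{\mathscr{L}}^0_k u^0$ of $g^0_k$ summed against $\varepsilon^k$, and the mixed products $\varepsilon^{k+j}\tilde{\mathscr{L}}^0_j v^0_k$ of total degree $k+j\le n-1$ cancel, in aggregate, against the hierarchy terms $\sum_{k}\varepsilon^k\sum_{l=1}^{k-1}\tilde{\mathscr{L}}^0_{k-l}v^0_l$. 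What survives are the remainder contributions $\tilde{\mathscr{L}}^\varepsilon_n u^0$ and $\sum_{k=1}^{n-1}\varepsilon^k\tilde{\mathscr{L}}^\varepsilon_n v^0_k$, together with the mixed products of total degree $l+j\ge n$; after factoring out $\varepsilon^n$ this is the bracket I must recognise as $g^\varepsilon_n$.

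I expect the main obstacle to be the combinatorial bookkeeping of the double sum $\sum_{k,j}\varepsilon^{k+j}\tilde{\mathscr{L}}^0_j v^0_k$. After the cancellations I would re-index the survivors by the index $l$ of $v^0_l$ and by the residual power $m=l+j-n\ge 0$; writing the operator order as $j=n+m-l$ then identifies the degree-$n$ part ($m=0$) as $\sum_{l=1}^{n-1}\tilde{\mathscr{L}}^0_{n-l}v^0_l$ and the higher parts ($m\ge 1$) as $\sum_{m=1}^{n-2}\varepsilon^m\sum_{l=m+1}^{n-1}\tilde{\mathscr{L}}^0_{n+m-l}v^0_l$. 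Matching these orders and ranges termwise against the three sums defining $g^\varepsilon_n$, and appending the remainder pieces already isolated, yields the required identity; once the indices are checked to coincide the proposition follows, all manipulations being legitimate pointwise on $[0,T)\times D$ thanks to the $C^{1,2}$-regularity granted by [I].
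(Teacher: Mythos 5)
Your proposal is correct and is essentially the paper's own argument: both rest on the exact Taylor identity $\mathscr{L}^\varepsilon=\mathscr{L}^0+\sum_{j=1}^{n-1}\varepsilon^j\tilde{\mathscr{L}}^0_j+\varepsilon^n\tilde{\mathscr{L}}^\varepsilon_n$, the defining PDEs for $u^0$ and the $v^0_k$, and the same power-of-$\varepsilon$ cancellation bookkeeping, with only the cosmetic difference that you keep $(\partial/\partial t+\mathscr{L}^\varepsilon)v^\varepsilon_n$ intact on one side while the paper expands $\mathscr{L}^\varepsilon$ on $v^\varepsilon_n$ too and then recombines the orders $\varepsilon^n$ through $\varepsilon^{2n}$. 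Your re-indexed survivors $\sum_{m=1}^{n-2}\varepsilon^m\sum_{l=m+1}^{n-1}\tilde{\mathscr{L}}^0_{n+m-l}v^0_l$ match the paper's definition of $g^\varepsilon_n$ exactly (indeed, with $v^0_l$ where the paper's displayed formula has an apparent typo $v^0_k$ in the inner sum).
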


\begin{proof} 
It is obvious that $v^\varepsilon _n(T, x) = 0$ for $x\in D$ and 
$v^\varepsilon _n(t, x) = 0$ for $(t, x)\in [0, T]\times \partial D$. 
Apply Taylor's theorem 
to $\mathscr {L}^\varepsilon$ in (\ref {eq_PDE}) to observe 
\begin{eqnarray}\label{temp_expansion1}
\mathscr {L}^\varepsilon u^\varepsilon (t, x) = 
\left\{ \mathscr {L}^0 + \sum ^{n-1}_{k = 1}\varepsilon ^k\tilde{\mathscr {L}}^0_k + \varepsilon ^n\tilde{\mathscr {L}}^\varepsilon _n\right\} 
u^\varepsilon (t, x). 
\end{eqnarray}
Since $u^0$ is the solution of (\ref {eq_PDE}) with $\varepsilon = 0$, we get 
\begin{eqnarray}\label{temp_expansion3}
\frac{\partial }{\partial t}u^0(t, x) + 
\mathscr {L}^0u^0(t, x) = 0. 
\end{eqnarray}
Similarly, since $v^0_k$ is a solution of (\ref {eq_PDE2}), we have 
\begin{eqnarray}\label{temp_expansion4}
\frac{\partial }{\partial t}v^0_k(t, x) + 
\mathscr {L}^0v^0_k(t, x) + 
\tilde{\mathscr {L}}^0_ku^0(t, x) + 
\sum ^{k-1}_{l = 1}\tilde{\mathscr {L}}^0_{k-l}v^0_l(t, x) = 0. 
\end{eqnarray}
Combining (\ref {temp_expansion2})--(\ref {temp_expansion4}) and Theorem \ref {th_viscosity}, we obtain 
\begin{eqnarray*}
&&\varepsilon ^n\left\{ \frac{\partial }{\partial t}v^\varepsilon _n(t, x) + 
\mathscr {L}^0 v^\varepsilon _n(t, x)
+ 
\tilde{\mathscr {L}}^\varepsilon _nu^0(t, x) + 
\sum ^{n-1}_{l = 1}\tilde{\mathscr {L}}^0_{n-l}v^0_l(t, x) \right\} \\
&& + 
\sum ^{2n-2}_{k = n + 1}\varepsilon ^k\left\{ 
\tilde{\mathscr {L}}^0_{k-n}v^\varepsilon _n(t, x) + 
\tilde{\mathscr {L}}^\varepsilon _nv^0_{k - n}(t, x) + 
\sum ^{n-1}_{l = k - n + 1}\tilde{\mathscr {L}}^0_{k-l}v^0_l(t, x)\right\}
\\&&
+ \varepsilon ^{2n-1}\left\{\tilde{\mathscr {L}}^0_{n-1}v^\varepsilon _n(t, x) 
+ 
\tilde{\mathscr {L}}^\varepsilon _nv^0_{n-1}(t, x)
\right\}
+ 
\varepsilon ^{2n}\tilde{\mathscr {L}}^\varepsilon _nv^\varepsilon _n(t, x) = 0, 
\end{eqnarray*}
and thus, 
\begin{eqnarray*}
\frac{\partial }{\partial t}v^\varepsilon _n(t, x) + 
\mathscr {L}^\varepsilon v^\varepsilon _n(t, x) + g^\varepsilon _n(t, x) = 0. 
\end{eqnarray*}
This implies the assertion. 
\end{proof}

Set 
\begin{eqnarray*}\label{temp_bdd1}
\tilde{v}^\varepsilon _n(t, x) = \E \left [\int ^{\tau _D(X^{\varepsilon , x})\wedge (T-t)}_0
\exp \left( -\int ^r_0c(X^{\varepsilon , x}_v, \varepsilon )dv\right) g^\varepsilon _n(t + r, X^{\varepsilon , x}_r)dr\right ] . 
\end{eqnarray*}
By [G]--[H], we find that there are $C_n > 0$, $\tilde{m}_n\in \Bbb {N}$ which are independent of $\varepsilon $ and 
the function $M_n\in C([0, T))\cap L^1([0, T), dt)$ determined by $u^0, v^0_1, \ldots , v^0_{n - 1}$ such that 
\begin{eqnarray}\label{poly_g_eps}
|g^\varepsilon _n(t, x)| \leq C_nM_n(t)(1 + |x|^{2\tilde{m}_n}). 
\end{eqnarray}
The inequalities (\ref {SDE_poly_growth}) and (\ref {poly_g_eps}) imply 
\begin{eqnarray}\label{poly_v_eps}
|\tilde{v}^\varepsilon _n(t, x)| \leq C'_n\int ^T_tM_n(r)dr(1 + |x|^{2\tilde{m}_n})
\end{eqnarray}
for some $C'_n > 0$ which is also independent of $\varepsilon $.

\begin{prop} \ \label{prop_uniqueness_v_eps}
$v^\varepsilon _n = \tilde{v}^\varepsilon _n$. 
\end{prop}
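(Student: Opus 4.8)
The plan is to verify this Feynman--Kac identity directly by Itô's formula, using that, by Proposition \ref{prop_v}, $v^\varepsilon_n$ is a classical ($C^{1,2}$) solution of (\ref{eq_PDE2_eps}), equivalently $\frac{\partial}{\partial t}v^\varepsilon_n+\mathscr{L}^\varepsilon v^\varepsilon_n+g^\varepsilon_n=0$ on $[0,T)\times D$, with $v^\varepsilon_n(T,\cdot)=0$ on $D$ and $v^\varepsilon_n=0$ on $[0,T]\times\partial D$. The regularity needed for Itô is available since $v^\varepsilon_n$ is a difference quotient of the classical solutions $u^\varepsilon$, $u^0$ and $v^0_1,\ldots,v^0_{n-1}$.

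Fix $(t,x)$, write $D_r=\exp(-\int_0^r c(X^{\varepsilon,x}_v,\varepsilon)dv)$ and $\tau=\tau_D(X^{\varepsilon,x})\wedge(T-t)$. Applying Itô's formula to $D_r v^\varepsilon_n(t+r,X^{\varepsilon,x}_r)$ and using $dD_r=-c(X^{\varepsilon,x}_r,\varepsilon)D_r\,dr$ together with $\mathscr{L}^\varepsilon=\frac{1}{2}\sum a^{ij}\partial^2_{x^ix^j}+\sum b^i\partial_{x^i}-c$, the finite-variation part collapses to $D_r(\frac{\partial}{\partial t}v^\varepsilon_n+\mathscr{L}^\varepsilon v^\varepsilon_n)(t+r,X^{\varepsilon,x}_r)=-D_r g^\varepsilon_n(t+r,X^{\varepsilon,x}_r)$ by the PDE. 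Integrating over $[0,\tau]$ gives
\[
D_\tau v^\varepsilon_n(t+\tau,X^{\varepsilon,x}_\tau)-v^\varepsilon_n(t,x)
=-\int_0^\tau D_r g^\varepsilon_n(t+r,X^{\varepsilon,x}_r)\,dr
+\int_0^\tau D_r\sum_{i,k}\frac{\partial v^\varepsilon_n}{\partial x^i}\sigma^{ik}(X^{\varepsilon,x}_r,\varepsilon)\,dB^k_r.
\]
The terminal term vanishes identically: on $\{\tau_D(X^{\varepsilon,x})\geq T-t\}$ we have $t+\tau=T$ and $v^\varepsilon_n(T,\cdot)=0$ on $D$, whereas on $\{\tau_D(X^{\varepsilon,x})<T-t\}$ we have $X^{\varepsilon,x}_\tau\in\partial D$ and $v^\varepsilon_n(t+\tau,\cdot)=0$. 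Hence, once the stochastic integral is shown to have zero mean, taking expectations yields $v^\varepsilon_n(t,x)=\E[\int_0^\tau D_r g^\varepsilon_n(t+r,X^{\varepsilon,x}_r)\,dr]=\tilde v^\varepsilon_n(t,x)$, which is the claim.

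The main obstacle is the zero-mean property of the martingale term, because the coefficients are only locally Lipschitz of polynomial growth, $v^\varepsilon_n$ is a classical solution only on the open set $[0,T)\times D$, and $\nabla v^\varepsilon_n$ is not a priori bounded, so the integrand need not be globally square-integrable. I would resolve this by localization: set $\tau_N=\tau\wedge\inf\{r:\ |X^{\varepsilon,x}_r|\geq N\ \text{or}\ \mathrm{dist}(X^{\varepsilon,x}_r,\partial D)\leq 1/N\}$, run the identity up to $\tau_N$ so that the stochastic integral is a true martingale with zero expectation on the compact region where $v^\varepsilon_n$ is smooth, and then let $N\to\infty$. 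Since $\tau_N\uparrow\tau$ almost surely, the continuity of $v^\varepsilon_n$ up to the parabolic boundary together with the moment bound (\ref{SDE_poly_growth}) forces $\E[D_{\tau_N}v^\varepsilon_n(t+\tau_N,X^{\varepsilon,x}_{\tau_N})]\to 0$, while the polynomial estimate (\ref{poly_g_eps}) on $g^\varepsilon_n$, combined with (\ref{SDE_poly_growth}) and the integrability $M_n\in L^p([0,T),dt)$, lets dominated convergence handle the source term. This is the same scheme as the proof of Theorem 5.1.9 in \cite{Lamberton-Lapeyre} invoked earlier for (\ref{form_FK}); the only additional care concerns the possible singularity of $g^\varepsilon_n$ as $t\uparrow T$, which is precisely why the $L^p$-in-time control underlying (\ref{poly_g_eps})--(\ref{poly_v_eps}) is built into the hypotheses.
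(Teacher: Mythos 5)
Your proposal is correct and follows essentially the same route as the paper: the paper's proof is a one-line reference to the Feynman--Kac argument of Theorem 5.1.9 in \cite{Lamberton-Lapeyre}, and your It\^o-plus-localization scheme (using Proposition \ref{prop_v}, the boundary/terminal conditions to kill the stopped term, and the bounds (\ref{SDE_poly_growth}), (\ref{poly_g_eps}) with $M_n\in L^p$ to pass to the limit) is precisely that argument written out in detail. The only cosmetic refinement would be to also stop strictly before the terminal time (applying It\^o on $[0,\tau_N\wedge(T-t-\delta)]$ and letting $\delta\downarrow 0$), since $v^\varepsilon_n$ is $C^{1,2}$ only on $[0,T)\times D$; your $L^p$-in-time remark already contains the estimate needed for this limit.
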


\begin{proof} 
The assertion is easily obtained by the similar argument to the one in Theorem 5.1.9 in \cite {Lamberton-Lapeyre}. 
\end{proof}

\begin{proof}[Proof of Theorem \ref {th_main}]
By (\ref {temp_expansion2}) and Proposition \ref {prop_uniqueness_v_eps}, we have 
$u^\varepsilon (t, x) - 
(u^0(t, x) + \sum ^{n-1}_{k = 1}\varepsilon ^kv^0_k(t, x)) = 
\varepsilon ^n\tilde{v}^\varepsilon _n(t, x)$. 
Our assertion is now immediately obtained by the inequality (\ref {poly_v_eps}). 
\end{proof}

\subsection{Proof of Theorem \ref{th_form_semigroup}}\label{proof_form_semigroup}
\begin{enumerate}
\item
Firstly, let us consider the case for $k = 1$.
Let $g\in \mathcal {H}^{m, 1}$. 
Observe that
\begin{eqnarray*}
&&\int ^{(T - t)\wedge \tau _D(X^{0, x})}_0
\exp \left( -\int ^r_0c(X^{0, x}_v,0)dv\right)g(t + r, X^{0, x}_r)dr\\
&=& 
\int ^{T - t}_0
\exp \left( -\int ^r_0c(X^{0, x}_v,0)dv\right)g(t + r, X^{0, x}_r)1_{\{ \tau _D(X^{0, x})\geq r\} }dr, 
\end{eqnarray*}
and we obtain
\begin{eqnarray*}
&&\E \left [\int ^{(T - t)\wedge \tau _D(X^{0, x})}_0
\exp \left( -\int ^r_0c(X^{0, x}_v,0)dv\right)g(t + r, X^{0, x}_r)dr\right ]\\
&=& 
\int ^{T - t}_0\E \left [
\exp \left( -\int ^r_0c(X^{0, x}_v,0)dv\right)g(t + r, X^{0, x}_r)1_{\{ \tau _D(X^{0, x})\geq r\} }\right ]dr\\
&=& 
\int ^{T - t}_0P^D_rg(t + r, \cdot )(x)dr. 
\end{eqnarray*}
Thus, under the assumption [H], we see 
\begin{eqnarray}
v^0_1(T - t, x) &=& 
\E \left[ \int ^{t}_0
\exp \left( -\int ^r_0 c(X^{0, x}_v,0) dv
\right) g^0_1(T-t + r, X^{0, x}_r) 1_{\{\tau _D(X^{0, x})\geq r\}} dr\right ]\nn\\
&=&
\int ^t_0P^D_r\tilde{\mathscr {L}}^0_1u^0(T - t + r, \cdot )(x) 
dr \nn 
\label{form_semi_0}
\\ 
&=& 
\int ^t_0P^D_r\tilde{\mathscr {L}}^0_1P^D_{t - r}f(x)dr = 
\int ^t_0P^D_{t - r}\tilde{\mathscr {L}}^0_1P^D_rf(x)dr. \label{form_semi_1}
\end{eqnarray}
Thus, we have the assertion for $k = 1$. 
\item
If the assertion holds for $1, \ldots , k - 1$, then 
\begin{eqnarray*}
&&v^0_k(T - t, x) = \int ^t_0P^D_{t_0}\{ \tilde{\mathscr {L}}^0_ku^0 + 
\sum ^{k - 1}_{l = 1}\tilde{\mathscr {L}}^0_{k - l}v^0_l\} (T - t + t_0, \cdot )(x)dt_0 \\
&=& 
\int ^t_0P^D_{t - t_0}\tilde{\mathscr {L}}^0_kP^D_{t_0}f(x)dt_0\nonumber\\&& + 
\sum ^{k - 1}_{l = 1}
\sum ^l_{m = 1}\sum _{(\beta ^i)^m_{i=1}\subset \Bbb {N}^m, \sum _i\beta ^i = l}
\int ^t_0\int ^{t_0}_0\int ^{t_1}_0\cdots \int ^{t_{l - 1}}_0\nonumber\\&&\hspace{35mm}
P^D_{t - t_0}\tilde{\mathscr {L}}^0_{k - l}
P^D_{t_0 - t_1}\tilde{\mathscr {L}}^0_{\beta ^1}P^D_{t_1 - t_2}\tilde{\mathscr {L}}^0_{\beta ^2}
\cdots P^D_{t_{l - 1} - t_l}\tilde{\mathscr {L}}^0_{\beta ^l}P^D_{t_l}f(x)
dt_l\cdots dt_1dt_0\nonumber\\
&=& 
\int ^t_0P^D_{t - t_0}\tilde{\mathscr {L}}^0_kP^D_{t_0}f(x)dt_0\\&& + 
\sum ^{k}_{l = 2}
\sum ^l_{m = 1}\sum _{(\beta ^i)^m_{i=1}\subset \Bbb {N}^m, \sum _i\beta ^i = k}
\int ^t_0\int ^{t_1}_0\int ^{t_2}_0\cdots \int ^{t_{l - 1}}_0\\&&\hspace{35mm}
P^D_{t - t_1}\tilde{\mathscr {L}}^0_{\beta ^1}
P^D_{t_1 - t_2}\tilde{\mathscr {L}}^0_{\beta ^2}P^D_{t_2 - t_3}\tilde{\mathscr {L}}^0_{\beta ^3}
\cdots P^D_{t_{l - 1} - t_l}\tilde{\mathscr {L}}^0_{\beta ^l}P^D_{t_l}f(x)
dt_l\cdots dt_1\\
&=& 
\sum ^k_{l = 1}\sum _{(\beta ^i)^l_{i=1}\subset \Bbb {N}^l, \sum _i\beta ^i = k}
\int ^t_0\int ^{t_1}_0\cdots \int ^{t_{l - 1}}_0
P^D_{t - t_1}\tilde{\mathscr {L}}^0_{\beta ^1}P^D_{t_1 - t_2}\tilde{\mathscr {L}}^0_{\beta ^2}
\cdots P^D_{t_{l - 1} - t_l}\tilde{\mathscr {L}}^0_{\beta ^l}P^D_{t_l}f(x)dt_l\cdots dt_1. 
\end{eqnarray*}
Thus, our assertion is also true for $k$. Then we complete the proof of 
Proposition \ref {th_form_semigroup} 
by mathematical induction. 
\end{enumerate}

\subsection{Proof of Theorem \ref {DOBarrier}}\label{sec_proof_semigroup_th2}
By the asymptotic expansion in Section 3
and Theorem \ref{th_form_semigroup} with $k=1$,
we see that the expansion 
\begin{eqnarray*}
C_\mathrm {Barrier}^{SV,\varepsilon}(T,e^x) = 
C_\mathrm {Barrier}^{BS,\varepsilon}(T,e^x) +\varepsilon v^0_1(0, x) + O(\varepsilon ^2)
\end{eqnarray*}
holds with 
\begin{eqnarray}\label{eg_v1}
v^0_1(t, x) = \int ^{T - t}_0P^{D}_{T - t - r}\tilde{\mathscr {L}}_1^0 P^{D}_r\bar{f} (x)dr. 
\end{eqnarray}
Then, we have the following proposition for an expression of $v^0_1(0, x)$.
The proof 
is given in Section \ref {sec_proof_eg_prop1}. 
\begin{prop} \ \label{Barrier_Semi-group}
\begin{eqnarray*}
v^0_1(0, x) = \frac{T}{2}\nu \rho \sigma^2\frac{\partial ^2}{\partial x \partial \sigma }P^{D}_{T}\bar{f}(x) - 
\frac{1}{2}\E [(T - \tau _D(X^{0, x}))\nu \rho \sigma^2 \frac{\partial ^2}{ \partial x \partial \sigma}P^{D}_{T - \tau _D(X^{0, x})}\bar{f}(l)1_{\{\tau _D(X^{0, x}) < T\}}]. 
\end{eqnarray*}
\end{prop}

We remark that the expectation in the above equality can be represented as
\begin{eqnarray}
&&\frac{1}{2}\E [(T - \tau _D(X^{0, x}))\nu \rho \sigma^2 \frac{\partial ^2}{ \partial x \partial \sigma}P^{D}_{T - \tau _D(X^{0, x})}\bar{f}(l)1_{\{\tau _D(X^{0, x}) < T\}}]\nn\\
&=&
\int_0^{T} {(T-s)\over 2}\nu \rho \sigma^2 \frac{\partial ^2}{ \partial x \partial \sigma}P^{D}_{T -s}\bar{f}(l) h(s,x-l) ds,\label{f_2}
\end{eqnarray} 
where 
$h(s,x-l)$ is the density function of the first hitting time to $l$ defined by 
\be
h(s,x-l) = {-(l-x)\over \sqrt{2\pi \sigma^2 s^3}}\exp\l(-{\l\{l-x+\sigma^2s/2 \r\}^2\over 2\sigma^2 s}\r).
\ee
Now we evaluate 
\begin{eqnarray}\nonumber 
\nu \rho \sigma^2 \frac{\partial ^2}{ \partial x \partial \sigma}P^{D}_{t}\bar{f}(x)&=&\nu \rho \sigma^2 \frac{\partial ^2}{ \partial x \partial \sigma} C^{BS}(t,e^x)
-\nu \rho \sigma^2 \frac{\partial ^2}{ \partial x \partial \sigma} 
\left\{ \l(\frac{e^x}{L} \r)C^{BS}\l(t,\frac{L^2}{e^x} \r) \right\} .
\label{f_20000}
\end{eqnarray}
Note that
\begin{eqnarray}
\frac{\partial}{\partial \sigma}C^{BS}\l(t,{e^x} \r)= e^x n(d_1(t,x))\sqrt{t},
\end{eqnarray}
and
\begin{eqnarray}
\frac{\partial}{\partial \sigma}\left\{ 
\l(\frac{e^x}{L} \r)C^{BS}\l(t,\frac{L^2}{e^x} \r) \right\} = L n(c_1(t,x))\sqrt{t}.
\end{eqnarray}
Then we have
\begin{eqnarray}
\nu \rho\sigma^2 \frac{\partial^2}{\partial x  \partial \sigma}C^{BS}\l(t,{e^x} \r)
&=&\nu \rho\sigma^2 e^x n(d_1(t,x))\sqrt{t}\l\{ 1-\frac{d_1(t,x)}{\sigma \sqrt{t}}\r\}\nn\\
&=& -\nu \rho \sigma e^x n(d_1(t,x))d_2(t,x)\label{f_30000}
\end{eqnarray}
and
\begin{eqnarray}
\nu \rho\sigma^2 \frac{\partial^2}{\partial x  \partial \sigma}\left\{ 
\l(\frac{e^x}{L} \r)C^{BS}\l(t,\frac{L^2}{e^x} \r)\right\} 
&=&\nu \rho \sigma Ln(c_1(t,x))c_1(t,x). \label{f_40000}
\end{eqnarray}
Combining (\ref {f_20000}), (\ref {f_30000}) and (\ref {f_40000}), we get 
\begin{eqnarray}
\nu \rho \sigma^2 \frac{\partial ^2}{ \partial x \partial \sigma}P^{D}_{t}\bar{f}(x)
&=&\nu \rho  \sigma \l\{ e^x n(d_1(t,x))(-d_2(t,x))-Ln(c_1(t,x))c_1(t,x)\r\}. \label{f_10000}
\end{eqnarray}
Substituting (\ref {f_10000}) into (\ref {f_2}), we have
\begin{eqnarray*}
\nu \rho \sigma^2 \frac{\partial ^2}{ \partial x \partial \sigma}P^{D}_{t}\bar{f}(l)
&=&\nu \rho  \sigma L n(d_1(t,l))(-d_2(t,l))-\rho  \sigma Ln(c_1(t,l))c_1(t,l)\nn\\
&=&\nu \rho  \sigma L n(d_1(t,l))(-(d_1(t,l)+d_2(t,l)))\nn\\
&=&\nu \rho  \sigma \frac{1}{\sqrt{2\pi}} \exp \l(-\frac{\{l-\log K+\frac{1}{2}\sigma^2 {t}\}^2}{2\sigma^2 {t}} \r)
\l(\frac{-2(l-\log K)}{\sigma \sqrt{t}}\r).
\end{eqnarray*}
Thus we obtain
\begin{eqnarray}
&&-\frac{1}{2}\E [(T - \tau _D(X^{0, x}))\nu \rho \sigma^2 \frac{\partial ^2}{ \partial x \partial \sigma}P^{D}_{T - \tau _D(X^{0, x})}\bar{f}(l)1_{\{\tau _D(X^{0, x}) < T\}}]\nn\\
&=&-\int_0^{T} {(T-s)\over 2}\nu \rho  \sigma L \frac{1}{\sqrt{2\pi}} e^{-\frac{\l\{l-\log K+\frac{1}{2}\sigma^2 {(T-s)}\r\}^2}{2\sigma^2 {(T-s)}} }
\l(\frac{-2(l-\log K)}{\sigma \sqrt{T-s}}\r)\nn\\
&&
\times{-(l-x)\over \sqrt{2\pi \sigma^2s^3}}e^{-{\l\{(l-x)+(\sigma^2/2) s \r\}^2\over 2\sigma^2 s}}ds\nn\\
&=&  \frac{\nu \rho L(x-l)\log(L/K)}{{ 2\pi }\sigma }
\int_0^{T} \frac{(T-s)^{1/2}}{s^{3/2}} 
\exp \left (-\frac{c_2(T-s, L/K) + c_2(s, L/e^x)}{2}\right )
ds.\nn \\
\label{f_3}
\end{eqnarray}
By Proposition \ref {Barrier_Semi-group}, (\ref{f_2}), (\ref {f_10000}) and (\ref{f_3}), we reach the assertion.

\subsubsection{Proof of Proposition \ref {Barrier_Semi-group}}\label{sec_proof_eg_prop1}
First, we notice the following relation: 
\begin{eqnarray}
\tilde{\mathscr {L}}_1^0 P^{D}_t\bar{f} (x)
=\nu \rho \sigma^3 t \l(\frac{\partial^3}{\partial x^3}-\frac{\partial^2}{\partial x^2} \r)P^{D}_{t}\bar{f}(x).
\end{eqnarray} 
Then, using the relations $\mathscr {L}^0 \tilde{\mathscr {L}}_1^0P^D_t\bar{f}(x)=\tilde{\mathscr {L}}_1^0 \mathscr {L}^0P^D_t\bar{f}(x)$ 
and 
\begin{eqnarray*}
\l(\frac{\partial}{\partial t}+ \mathscr {L}^0 \r) P^{D}_{T-t}\bar{f} (x)=0,
\end{eqnarray*}
we get 
\begin{eqnarray} 
\l(\frac{\partial}{\partial t}+ \mathscr {L}^0 \r)\frac{T-t}{2}  \tilde{\mathscr {L}}_1^0 P^{D}_{T-t}\bar{f} (x)
&=&-\tilde{\mathscr {L}}_1^0 P^{D}_{T-t}\bar{f}(x).
\end{eqnarray} 
Also, we have 
\begin{eqnarray}
\l(\frac{\partial}{\partial t}+ \mathscr {L}^0 \r)
\int ^{T-t}_0 P^{D}_{T -t- r}\left(\nu \rho \sigma^2 \frac{\partial ^2}{\partial x \partial \sigma }P^{D}_r\bar{f}\right) (x)dr
 = -\tilde{\mathscr {L}}_1^0 P^{D}_{T-t}\bar{f}(x), \ \ x\in (l, \infty ).
\end{eqnarray}
Therefore, the function 
\begin{eqnarray}\label{def_eta}
\eta(t,x)= \int ^{T-t}_0 P^{D}_{T-t - r}\left(\nu \rho \sigma^2 \frac{\partial ^2}{\partial x \partial \sigma }P^{D}_r\bar{f}\right) (x)dr 
- \frac{T-t}{2}  \tilde{\mathscr {L}}_1^0 P^{D}_{T-t}\bar{f} (x)
\end{eqnarray}
satisfies the following PDE 
\begin{eqnarray*} 
\left\{
\begin{array}{ll}
 	\l(\frac{\partial}{\partial t}+ \mathscr {L}^0 \r)\eta(t,x)=0,& (t, x)\in [0, T)\times (l, \infty ),	\\
 	\eta(T,x) = 0, & x\in [l, \infty ), \\
 	\eta(t,l) = -\frac{T-t}{2}  \tilde{\mathscr {L}}_1^0 P^{D}_{T-t}\bar{f} (l),& 	t\in [0, T). 
\end{array}
\right.
\end{eqnarray*} 
Then Theorem 6.5.2 in \cite{Friedman2} implies 
\begin{eqnarray}\label{rep_eta}
 \eta(0,x)=- \frac{1}{2}\E [(T - \tau _D(X^{0, x}))\nu \rho \sigma^2 \frac{\partial ^2}{ \partial x \partial \sigma}P^{D}_{T - \tau _D(X^{0, x})}\bar{f}(l)1_{\{\tau _D(X^{0, x}) < T\}}].  
\end{eqnarray} 
By (\ref {def_eta}) and (\ref {rep_eta}), we get the assertion. \qed 
\noindent\\

\section{Appendix B: Generalization 
}\label{sec_general}
This section generalizes the results given in Section 2 and Section 3
to treat more general cases covered in the main text.

Particularly, let $d'\in \{1, \ldots  , d\}$, and
we regard $X^{\varepsilon , x, i}_t$ as logarithm of the underlying asset prices 
for $i \leq d'$, 
and as parameter processes (e.g. a stochastic volatility and a stochastic interest rate) for $i > d'$. 
Also, we assume $I \subset [0, \infty )$ in this section for a technical reason introduced later. 

Let $(\Omega , \mathcal {F}, (\mathcal {F}_t)_t, P)$ be a filtered space 
equipped with a standard Brownian motion $(B_t)_t$. 
Set 
\begin{eqnarray}
\hat{b}^i(y, \varepsilon ) &=& 
\left\{
\begin{array}{ll}
 	y^i\left\{ b^i(\pi (y), \varepsilon ) + \frac{1}{2}\sum ^d_{j = 1}(\sigma ^{ij}(\pi (y), \varepsilon ))^2\right\}, & i \leq d',	\\
 	b^i(\pi (y), \varepsilon ), & i > d', 
\end{array}
\right.\\
\hat{\sigma }^{ij}(y, \varepsilon ) &=& 
\left\{
\begin{array}{ll}
 	y^i\sigma ^{ij}(\pi (y), \varepsilon ), & i \leq d',	\\
 	\sigma ^{ij}(\pi (y), \varepsilon ), & i > d', 
\end{array}
\right.
\end{eqnarray}
where $\pi (y) = (\log y^1, \ldots , \log y^{d'}, y^{d' + 1}, \ldots , y^d) \in \Bbb {R}^d$.

Next, we introduce new assumptions for the generalization: $[A']$--$[C']$ and $[G']$--$[H']$ below with $[D], [F]$ 
in Section 2 and Section 3, respectively
are necessary for the 
generalization (Theorem \ref{th_general} below)
of the asymptotic expansion. 

\begin{description}
 \item[\mbox{[A']}] \ 
For each $\varepsilon \in I$ it holds that $\sigma ^{ij}(\cdot , \varepsilon ), b^i(\cdot , \varepsilon )\in \mathcal {L}$,
and that 
$\hat{\sigma }^{ij}(\cdot , \varepsilon ), \hat{b}^i(\cdot , \varepsilon )$ 
and $c(\pi (\cdot ), \varepsilon )$ are also in $\mathcal {L}$.
Here, $\mathcal {L}$ is defined in the assumption $[A]$ of Section 2,
that is the set of
locally Lipschitz continuous functions defined on $\Bbb {R}^d$.

Moreover, there exists a solution $(X^{\varepsilon , x}_t)_t$ of SDE (\ref {eq_SDE_epsilon}) and 
for any $m > 0$ there are $m', C > 0$ such that 
\begin{eqnarray}\label{ineq_exp}
\sup _{0\leq r\leq t}\E [|Y^{\varepsilon , y}_r|^{2m}] \leq Ct^{m - 1}(1 + |y|^{2m'}) , \ \ 
(t, y)\in [0, T]\times [0, \infty )^{d'}\times \Bbb {R}^{d - d'}, \ \varepsilon \in  I , 
\end{eqnarray}
where 
\bea
Y^{\varepsilon , y}_t &=& \iota (X^{\varepsilon , \pi (y)}_t),\\
\iota (x) &=& (e^{x^1}, \ldots , e^{x^{d'}}, x^{d' + 1}, \ldots , x^d)\in \Bbb {R}^d. \nn
\eea
\end{description}
\begin{rem}
Note that Ito's formula implies that $(Y^{\varepsilon , y}_t)_t$ is a solution of 
\begin{eqnarray*}
\left\{
\begin{array}{l}
 	dY^{\varepsilon , y}_t = \hat{b}(Y^{\varepsilon , y}_t, \varepsilon )dt + \hat{\sigma }(Y^{\varepsilon , y}_t, \varepsilon )dB_t,\\
 	Y^{\varepsilon , y}_0 = y. 
\end{array}
\right.
\end{eqnarray*}
\end{rem}

\begin{description}
\item[\mbox{[B']}] \ The function $f(x)$ is represented by the continuous function 
$\hat{f} : \Bbb {R}^d\longrightarrow \Bbb {R}$ as $f(x) = \hat{f}(\iota (x))$. 
There exists $C_{\hat{f}} > 0$ such that $|\hat{f}(y)|^2 \leq C_{\hat{f}}(1 + |y|^{2m})$, $y\in \Bbb {R}^d$. 
Moreover, $f(x) = 0$ on $\Bbb {R}^d\setminus D$. 
 \item[\mbox{[C']}] \ 
\ In addition to the condition [C] ($c(x, \varepsilon )\geq 0$ and $c(\cdot , \epsilon )\in \mathcal {L}$), 
there is a constant $A^\varepsilon _2 > 0$ such that 
\begin{eqnarray*}
|\hat{\sigma }^{ij}(y, \varepsilon )|^2 + 
|\hat{b}^{i}(y, \varepsilon )|^2 
&\leq & 
A^\varepsilon _2(1 + |y|^2), \ \ i, j = 1, \ldots , d, \\
c(x, \varepsilon )^2 
&\leq & 
A^\varepsilon _2(1 + |\iota (x)|^{2m}). 
\end{eqnarray*}
\end{description}

\begin{rem}
We remark that 
$[C']$ implies 
\begin{eqnarray}\label{gen_linear_growth}
|\sigma ^{ij}(x, \varepsilon )|^2 + 
|b^{i}(x, \varepsilon )|^2 
\leq 
A^\varepsilon _3(1 + |y|^2), \ \ i, j = 1, \ldots , d
\end{eqnarray}
for some $A^\varepsilon _3 > 0$. 
\end{rem}

We note that Theorem 3.1 in \cite{Rubio} no longer works for the PDE (\ref{eq_PDE}): 
\begin{eqnarray*}
\left\{
\begin{array}{ll}
 	\frac{\partial }{\partial t}u^\varepsilon (t, x) + \mathscr {L}^\varepsilon u^\varepsilon (t, x) = 0,& (t, x)\in [0, T)\times D,	\vspace{1mm}\\
	u^\varepsilon (T, x) = f(x),& x\in D,	\\
	u^\varepsilon (t, x) = 0,& (t, x)\in [0, T]\times \partial D 
\end{array}
\right.
\end{eqnarray*}
under $[A']$--$[B']$. 
Now, we focus on the generator 
\begin{eqnarray*}
\hat{\mathscr {L}}^\varepsilon &=& 
\frac{1}{2}\sum ^d_{i, j = 1}\hat{a}^{ij}(y, \varepsilon )\frac{\partial ^2}{\partial y^i\partial y^j} + 
\sum ^d_{i = 1}\hat{b}^i(y, \varepsilon )\frac{\partial }{\partial y^i} - c(\pi (y), \varepsilon ), 
\end{eqnarray*}
rather than $\mathscr {L}^\varepsilon $, 
where $\hat{a}^{ij} = \sum ^d_{k = 1}\hat{\sigma }^{ik}\hat{\sigma }^{jk}$. 
Moreover, define 
\begin{eqnarray*}
\hat{D} = \{ y\in \Bbb {R}^d\ ; \ y^i > 0, \ i = 1, \ldots , d'\ \mbox{and} \ \pi (y) \in D\} 
\end{eqnarray*}
and 
$\hat{u}^\varepsilon (t, y) = u^\varepsilon (t, \pi (y))$ $((t, y)\in [0, T]\times \hat{D})$, 
$0$ $((t, y)\in [0, T]\times \partial \hat{D})$. 
Then the function $\hat{u}^\varepsilon $ is expected to be the solution of 
\begin{eqnarray}\label{eq_PDE_S}
\left\{
\begin{array}{ll}
 	-\frac{\partial }{\partial t}\hat{u}^\varepsilon (t, y) - \hat{\mathscr {L}}^\varepsilon \hat{u}^\varepsilon (t, y) = 0,& (t, y)\in [0, T)\times \hat{D},	\vspace{1mm}\\
 	\hat{u}^\varepsilon (T, y) = \hat{f}(y),& x\in \hat{D},	\\
 	\hat{u}^\varepsilon (t, y) = 0,& (t, y)\in [0, T]\times \partial \hat{D}, 
\end{array}
\right.
\end{eqnarray}
and we obtain the following existence result. 
\begin{thm} \ \label{th_viscosity_general}Assume $[A']$--$[C']$ and $[D]$. 
Then, $u^\varepsilon (t, x)$ is a continuous viscosity solution of $(\ref {eq_PDE})$. 
Moreover, $\hat{u}^\varepsilon (t, y)$ is a continuous viscosity solution of 
satisfying 
\begin{eqnarray}\label{ineq_poly_S}
\sup _{(t, y)\in [0, T]\times \bar{\hat{D}}}|\hat{u}^\varepsilon (t, y)| / (1 + |y|^{2m'}) < \infty . 
\end{eqnarray}
\end{thm}

\begin{proof} 
The latter assertion is by the similar argument to the proof of Proposition \ref {prop_v2} in Appendix \ref{proof:general}. 
Then, the simple calculation gives the former assertion. 
\end{proof}

\begin{rem}
Here we no longer require a local ellpiticity condition [E], because 
we consider viscosity solutions of (\ref {eq_PDE}) and (\ref{eq_PDE_S}) rather than classical solutions:
we can directly show that the function $u^\varepsilon (t, x)$ 
(which is given in the form of a stochastic representation) 
becomes the viscosity solution of the corresponding PDE.

If we further assume a local ellipticity condition such as $[E]$, 
we may show the existence of classical solutions
which is characterized as 
\begin{eqnarray*}
\hat{u}^\varepsilon (t, y) = 
\E \left [\exp \left( -\int ^{T-t}_0c(\pi (Y^{\varepsilon , y}_r), \varepsilon )dr 
\right) \hat{f}(Y^{\varepsilon , y}_{T-t})1_{\{ \tau _{\hat{D}}(Y^{\varepsilon , y}) \geq T-t \}}\right ]. 
\end{eqnarray*}
\end{rem}

Moreover, applying Theorem 8.2 in \cite {Crandall-Lions-Ishii} and Theorem 7.7.2 in \cite{Nagai} to (\ref {eq_PDE_S}),
we have the following uniqueness theorem. 
\begin{thm} \ \label{th_viscosity_uniqueness_S}Assume $[A']$--$[C']$ and $[D]$. 
If $\hat{w}^\varepsilon (t, y)$ is a continuous viscosity solution of $(\ref {eq_PDE_S})$ satisfying the growth condition 
$(\ref {ineq_poly_S})$, then $\hat{u}^\varepsilon = \hat{w}^\varepsilon $. 
\end{thm}

For our generalization of the asymptotic expansion stated as Theorem \ref{th_general}  below,
we need to modify the assumptions $[G]$ and $[H]$ in the previous sections.

In order to state the existence of a function $v^0_k(t, x)$, we prepare the following set
which slightly modifies $\mathcal {H}^{m, p}$ in Definition \ref{def1}. 
Moreover, we define $\hat{\mathcal {G}}^{m}$ similarly to $\mathcal {G}^{m}$, 
replacing $\mathcal {H}^{m, 1}$ and $\mathcal {H}^{m, 2}$ in the definition 
with $\hat{\mathcal {H}}^{m, 1}$ and $\hat{\mathcal {H}}^{m, 2}$, respectively. 
\begin{description}
 \item[\mbox{[G']}] 
\ The condition [G] holds replacing $\mathcal {G}^{m}$ with 
$\hat{\mathcal {G}}^{m}$. That is,
\ $u^0\in \hat{\mathcal {G}}^{m}$, where 
\begin{eqnarray*}
\hat{\mathcal {G}}^{m} &=& 
\Big \{ g\in C^{1, 2}([0, T)\times D)\cap C([0, T]\times \bar{D})\ ; \\&&\hspace{40mm}
\frac{\partial g}{\partial x^i}\in \hat{\mathcal {H}}^{m, 2}, \ 
\frac{\partial ^2g}{\partial x^i\partial x^j}\in \hat{\mathcal {H}}^{m, 1}, \ 
i, j = 1, \ldots , d\Big \},
\end{eqnarray*}
and the set $\hat{\mathcal {H}}^{m, p}$ of $g\in C([0, T)\times \bar{D})$ 
is given by the following:
\end{description}
\begin{defn}
The set $\hat{\mathcal {H}}^{m, p}$ of $g\in C([0, T)\times \bar{D})$ 
is defined to satisfy the following condition: 
There is some $M^g\in C([0, T))\cap L^p([0, T), dt)$ 
such that 
\begin{eqnarray}\label{def_H_map_growth2}
|g(t, x)| \leq M^g(t)(1 + |\iota (x)|^{2m}), \ \ t\in [0, T), \ x, y\in \bar{D}.
\end{eqnarray}
\end{defn}

Accordingly, the condition $[H]$ is replaced by the following:
\begin{description}
 \item[\mbox{[H']}] 
\ The condition $[H]$ holds replacing $\mathcal {G}^{m}$ with 
$\hat{\mathcal {G}}^{m}$:
\ It holds that $v^0_k\in \hat{\mathcal {G}}^{m_n}$, $k = 1, \ldots , n - 1$ for some $m_n\in \Bbb {N}$. 
\end{description}

Then, we obtain the 
generalization of Theorem \ref {th_main} whose proof is given in 
Appendix \ref{proof:general}.
\begin{thm} \ \label{th_general} 
Assume $[A']$--$[C']$, $[D]$, $[F]$ and $[G']$--$[H']$. 
Then, there are positive constants $C_n$ and $\tilde{m}_n$ which are independent of $\varepsilon $ such that 
\begin{eqnarray*}
\left |u^\varepsilon (t, x) - (u^0(t, x) + \sum ^{n-1}_{k = 1}\varepsilon ^kv^0_k(t, x))\right | \leq C_n(1 + |\iota (x)|^{2\tilde{m}_n})\varepsilon ^n, \ \ 
(t, x)\in [0, T]\times \bar{D}. 
\end{eqnarray*}
\end{thm}

\subsection{Proof of Theorem \ref {th_general}}\label{proof:general}
Let $v^\varepsilon _n$ and $\tilde{v}^\varepsilon _n$ be as in Section \ref {sec_proof}. 
Thanks to the assumption $I\subset [0, \infty )$ and $[G']$--$[H']$, 
we can apply similar argument to the proof of Proposition \ref {prop_v}, which tells us that 
$v^\varepsilon _n$ is a viscosity solution of (\ref {eq_PDE2_eps}). 
That is, we obtain the next proposition.

\begin{prop} \ \label{prop_v2}
The function $\tilde{v}^\varepsilon _n$ is a continuous viscosity solution of $(\ref {eq_PDE2_eps})$. 
\end{prop}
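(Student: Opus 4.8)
The plan is to verify directly that the Feynman--Kac candidate $\tilde{v}^\varepsilon_n$ satisfies the viscosity sub- and supersolution inequalities for (\ref{eq_PDE2_eps}), by combining its growth bound, its continuity up to $\bar{D}$, a dynamic programming identity obtained from the Markov property, and It\^o's formula tested against smooth functions. First I would record what is already in hand: by (\ref{poly_v_eps}) the functional $\tilde{v}^\varepsilon_n$ is well-defined and of polynomial growth (this is also the growth estimate borrowed in Theorem \ref{th_viscosity_general}), and the bound (\ref{poly_g_eps}) on $g^\varepsilon_n$ supplies the integrable dominating functions needed to interchange expectation and limits later. Next I would settle the data: at $t=T$ the horizon $T-t$ vanishes so the defining integral is empty, giving $\tilde{v}^\varepsilon_n(T,x)=0$; and for $x\in\partial D$ the outside strong sphere property [D] makes every boundary point regular, so $\tau_D(X^{\varepsilon,x})=0$ a.s. and again the integral is empty, yielding $\tilde{v}^\varepsilon_n(t,x)=0$ on $[0,T]\times\partial D$. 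Using [D] I would then establish continuity of $\tilde{v}^\varepsilon_n$ on $[0,T]\times\bar{D}$ by the same style of argument as in Lemma 4.3 of \cite{Rubio}, controlling the dependence of $\tau_D$ on the starting point through boundary regularity and the growth control on $g^\varepsilon_n$.

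The core of the argument is a dynamic programming principle. Fixing an interior point $(t_0,x_0)\in[0,T)\times D$, writing $\tau=\tau_D(X^{\varepsilon,x_0})$ and setting $\theta=\tau\wedge h$ for small $h\in(0,T-t_0)$, I would split the defining integral at time $h$ and apply the (strong) Markov property of $X^{\varepsilon,x_0}$; since $X_\tau\in\partial D$ forces $\tilde{v}^\varepsilon_n(t_0+\tau,X_\tau)=0$ on $\{\tau\le h\}$, this gives
\begin{eqnarray*}
\tilde{v}^\varepsilon_n(t_0,x_0)=\E\left[\int_0^{\theta}e^{-\int_0^r c(X^{\varepsilon,x_0}_v,\varepsilon)dv}g^\varepsilon_n(t_0+r,X^{\varepsilon,x_0}_r)\,dr + e^{-\int_0^{\theta}c(X^{\varepsilon,x_0}_v,\varepsilon)dv}\,\tilde{v}^\varepsilon_n(t_0+\theta,X^{\varepsilon,x_0}_{\theta})\right].
\end{eqnarray*}

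From this identity the verification is routine. Let $\phi\in C^{1,2}$ touch $\tilde{v}^\varepsilon_n$ from above at $(t_0,x_0)$ with equality there. Replacing $\tilde{v}^\varepsilon_n(t_0+\theta,\cdot)$ by the larger $\phi(t_0+\theta,\cdot)$ in the identity, applying It\^o's formula to $s\mapsto e^{-\int_0^s c}\,\phi(t_0+s,X_s)$ on $[0,\theta]$ (so that the $c$-terms cancel against the discount factor and the stochastic integral drops out under expectation), I obtain $0\le\E\int_0^{\theta}e^{-\int_0^s c}\{\partial_t\phi+\mathscr{L}^\varepsilon\phi+g^\varepsilon_n\}(t_0+s,X_s)\,ds$; dividing by $h$ and letting $h\downarrow0$ gives $\partial_t\phi+\mathscr{L}^\varepsilon\phi+g^\varepsilon_n\ge0$ at $(t_0,x_0)$, which is exactly the subsolution inequality $-\partial_t\phi-\mathscr{L}^\varepsilon\phi-g^\varepsilon_n\le0$ in the sign convention of (\ref{eq_PDE2_eps}). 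The supersolution inequality follows symmetrically with test functions touching from below.

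I expect the main obstacle to be not this last limiting step but the regularity groundwork. The limit $\frac1h\E\int_0^{\theta}e^{-\int c}g^\varepsilon_n(t_0+r,X_r)\,dr\to g^\varepsilon_n(t_0,x_0)$ is unproblematic because $t_0<T$ is fixed, so $g^\varepsilon_n(t_0+\cdot,\cdot)$ is continuous near $r=0$ and the $L^p$-in-time singularity of $g^\varepsilon_n$ (the central difficulty flagged in the introduction) bites only at the terminal time, where it is absorbed by the terminal condition. The genuinely delicate points are instead the continuity of $\tilde{v}^\varepsilon_n$ down to $\partial D$ and the validity of the dynamic programming identity near the boundary, both of which hinge on handling the path- and starting-point-dependence of the exit time $\tau_D$ carefully through the strong sphere property [D].
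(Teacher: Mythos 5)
Your proposal is correct and follows essentially the same route as the paper's proof: continuity on $[0,T)\times\bar{D}$ by the Rubio-type argument together with the bound (\ref{poly_v_eps}) to handle $t=T$, then the Markov property, It\^o's formula applied to the discounted test function, and the limit $h\downarrow 0$ to obtain the sub- and supersolution inequalities. The only difference is bookkeeping: you stop at $\theta=\tau_D\wedge h$ and use that $\tilde{v}^\varepsilon_n$ vanishes on $\partial D$, whereas the paper works on the event $\{\tau_D(X^x)\geq h\}$ and explicitly estimates the early-exit correction $A_h$ via the Chebyshev bound $P(\tau_D(X^x)<h)=O(h^3)$ --- the same exit-time control that is implicitly needed in your final division by $h$.
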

\begin{proof}
Until the end of the proof we suppress $\varepsilon $ in the notation. 
First, we check the continuity. 
By the similar argument to the proof of Lemma 4.2 in \cite{Rubio}, we see that 
$v_n$ is continuous on $[0, T)\times \bar{D}$. 
Moreover, similarly to (\ref {poly_v_eps}), 
we see that there are a function $M_n\in C([0, T))\cap L^1([0, T), dt)$ and 
constants $\tilde{C}_n, \tilde{m}_n > 0$ such that 
\begin{eqnarray}\label{moment_tilde_vn}
|\tilde{v}_n(t, x)|\leq \tilde{C}_n\int ^T_tM_n(r)dr(1 + |\iota (x)|^{2\tilde{m}_n}). 
\end{eqnarray}
Thus we get 
\begin{eqnarray*}
\sup _{x\in K\cap \bar{D}}|\tilde{v}_n(t, x)| \leq 
C'_n(1 + \sup _{x\in K}|\iota (x)|^{2m})\left\{ \int ^T_0M_n(r)dr - \int ^t_0M_n(r)dr\right\} 
\ \longrightarrow \ 0, \ \ t\rightarrow T 
\end{eqnarray*}
for any compact set $K\subset \Bbb {R}^d$. 
Thus, $v_n$ is continuous on $[0, T]\times \bar{D}$. 

Next, we show that $v_n$ is a viscosity subsolution of $(\ref {eq_PDE2_eps})$. 
By the definition of $\tilde{v}_n$, we easily get 
$\tilde{v}_n(T, x) = 0$ for $x\in D$ and 
$\tilde{v}_n(t, x) = 0$ for $(t, x)\in [0, T]\times \partial D$. 
Now take any $(t, x)\in [0, T)\times D$ and 
let $\varphi $ be $C^{1, 2}$-function such that $v_n - \varphi $ has a maximum $0$ at $(t, x)$. 
We may assume that $\varphi $ and its derivatives have polynomial growth rates in $x$ uniformly in $t$. 
By the Markov property, we have 
\begin{eqnarray*}
&&\E \left [ J(h\wedge \tau _D(X^x))\tilde{v}_n
\left (t + h\wedge \tau _D(X^x), X^x_{h\wedge \tau _D(X^x)}\right )\right ] \ = \ 
\E \left [ J(h)\tilde{v}_n(t + h, X^x_h)1_{\{ \tau _D(X^x)\geq h\} }\right ]\\
&=& 
\E \left [\int ^{(T - t)\wedge (\tau _D(X^x_{\cdot + h}) + h)}_h
J(r)g_n(t + r, X^x_r)dr1_{\{ \tau _D(X^x)\geq h\} }\right ]
\end{eqnarray*}
for $h\in (0, T - t)$, 
where $J(r) = \exp \left ( -\int ^r_0c(X^x_v, \varepsilon )dv\right )$. 
Since $\tau _D(X^x_{\cdot + h}) = \tau _D(X^x) - h$ on $\{\tau _D(X^x) \geq h\}$, we obtain 
\begin{eqnarray*}
\E \left [ J(h\wedge \tau _D(X^x))\tilde{v}_n
\left (t + h\wedge \tau _D(X^x), X^x_{h\wedge \tau _D(X^x)}\right )\right ] = 
\tilde{v}_n(t, x) - \E \left [\int ^{h\wedge \tau _D(X^x)}_0J(r)g_n(t + r, X^x_r)dr\right ]. 
\end{eqnarray*}
Therefore, 
\begin{eqnarray*}
&&\varphi (t, x) \ =\  \tilde{v}_n(t, x)\\
&=& 
\E \left [ J(h\wedge \tau _D(X^x))\tilde{v}_n
\left (t + h\wedge \tau _D(X^x), X^x_{h\wedge \tau _D(X^x)}\right )\right ] + 
\E \l[\int ^{h\wedge \tau _D(X^x)}_0g_n(t + r, X^x_r)dr\r]\\
&\leq & 
\E \left [ J(h\wedge \tau _D(X^x))\varphi 
\left (t + h\wedge \tau _D(X^x), X^x_{h\wedge \tau _D(X^x)}\right )\right ] + 
\E \l[\int ^{h\wedge \tau _D(X^x)}_0g_n(t + r, X^x_r)dr\r]. 
\end{eqnarray*}
Note that $[A']$, $[C']$ and (\ref {gen_linear_growth}) imply that 
\begin{eqnarray*}
\int ^\cdot _0J(r)\sigma ^{ij}(X^x_r, \varepsilon )\frac{\partial }{\partial x^i}\varphi (t + r, X^x_r)dB^i_r
\end{eqnarray*}
is a martingale. 
Thus, applying Ito's formula, we get 
\begin{eqnarray*}
-\frac{1}{h}\int ^h_0\E 
\left[\left\{ \left( \frac{\partial }{\partial t} + \mathscr {L}\right )
\varphi (t + r, X^x_r) + g_n(t + r, X^x_r)\right\} 1_{\{ \tau _D(X^x)\geq h\} }\right ]dr \leq 0. 
\end{eqnarray*}
Letting $h\rightarrow 0$, we see that 
\begin{eqnarray*}
-\frac{\partial }{\partial t}\varphi (t, x) - \mathscr {L}\varphi (t, x) - g_n(t, x) \leq 0. 
\end{eqnarray*}
Hence, $\tilde{v}_n$ is a viscosity subsolution of (\ref {eq_PDE2_eps}). 
By the similar argument, we also find that $\tilde{v}_n$ is a viscosity supersolution. 
\end{proof}

To see the equivalence $v^\varepsilon _n = \tilde{v}^\varepsilon _n$, 
we need to give a new proof of Proposition \ref {prop_uniqueness_v_eps} under the assumptions of Theorem \ref {th_general}. 

\begin{proof}[Proof of Proposition \ref {prop_uniqueness_v_eps}] 
Set 
$\bar{u}^\varepsilon _n(t, x) = 
u^0(t, x) + \sum ^{n-1}_{k = 1}\varepsilon ^kv^0_k(t, x) + \varepsilon ^n\tilde{v}^\varepsilon _n(t, x)$. 
The analogous argument of the proof of Proposition \ref {prop_v} implies that 
$\bar{u}^\varepsilon _n$ is the continuous viscosity solution of (\ref {eq_PDE}). 
Thus $\bar{u}^\varepsilon _n(t, \pi (y))$ is the continuous viscosity solution of (\ref {eq_PDE_S}). 
By (\ref {moment_tilde_vn}), we see that 
$\bar{u}^\varepsilon _n(t, \pi (y))$ has a polynomial growth rate in $y$ uniformly in $t$. 
Then, Theorem \ref {th_viscosity_uniqueness_S} leads us to 
$\bar{u}^\varepsilon _n(t, \pi (y)) = \hat{u}^\varepsilon (t, y)$, which implies 
$\bar{u}^\varepsilon _n(t, x) = u^\varepsilon (t, x)$. 
This equality, (\ref {temp_expansion2}) and (\ref {moment_tilde_vn}) imply the assertion. 
\end{proof}

Now, we obtain the assertion of Theorem \ref {th_general} by the same way as that of Theorem \ref {th_main}.

\end{document}